\newcommand{\version}{long}
\newclass{\ioPSPACE}{i.o.\text{-}PSPACE}
\newlang{\Halt}{Halt}
\newcommand{\ifft}{\ifundef{\shortiff}{if and only if }{iff }}
\newcommand{\indraft}[1]{\ifthenelse{\equal{\version}{draft}}{#1}{}}
\newcommand{\infinal}[1]{\ifthenelse{\equal{\version}{final}}{#1}{Only shown in the final version}}
\newcommand{\inshort}[1]{\ifthenelse{\equal{\version}{short}}{#1}{}}
\newcommand{\inlong}[1]{\ifthenelse{\equal{\version}{long}}{#1}{}}
\newcommand{\secref}[1]{Section~\ref{sec:#1}}
\newcommand{\ssecref}[1]{Subsection~\ref{ssec:#1}}
\newcommand{\tabref}[1]{Table~\ref{tab:#1}}
\newcommand{\renameenv}[2]{
  \expandafter\let\csname #1#2\expandafter\endcsname
  \csname #1\endcsname
  \expandafter\let\csname end#1#2\expandafter\endcsname
  \csname end#1\endcsname
  \expandafter\let\csname #2\endcsname\relax
  \expandafter\let\csname end#2\endcsname\relax}
\ifundef{\defaultlists}{
  \usepackage[inline,shortlabels]{enumitem}
  \setenumerate[1]{(a),itemsep=0pt,topsep=3pt,parsep=0pt,partopsep=0pt}
  \setenumerate[2]{(i),noitemsep,topsep=3pt,parsep=0pt,partopsep=0pt}
  \setenumerate[3]{(A),noitemsep,topsep=3pt,parsep=0pt,partopsep=0pt}
  \setenumerate[4]{(I),noitemsep,topsep=3pt,parsep=0pt,partopsep=0pt}
  \setitemize{noitemsep,topsep=3pt,parsep=0pt,partopsep=0pt}
  \setdescription{noitemsep,topsep=3pt,parsep=0pt,partopsep=0pt}
  \setlist{noitemsep,topsep=3pt,parsep=0pt,partopsep=0pt}}{}
\newcolumntype{x}[1]{>{\centering\arraybackslash}m{#1}}
\let\eqref\relax
\DeclareFontFamily{U}{mathx}{\hyphenchar\font45}
\DeclareFontShape{U}{mathx}{m}{n}{
      <5> <6> <7> <8> <9> <10>
      <10.95> <12> <14.4> <17.28> <20.74> <24.88>
      mathx10
      }{}
\DeclareSymbolFont{mathx}{U}{mathx}{m}{n}
\DeclareMathSymbol{\bigtimes}{1}{mathx}{"91}
\algnewcommand{\Input}{\item[\textbf{Input:}]}
\algnewcommand{\Output}{\item[\textbf{Output:}]}
\newcommand\restr[2]{{
  \left.\kern-\nulldelimiterspace
  #1
  \vphantom{\big|}
  \right|_{#2}
  }}
\newcommand{\concat}{:}
\newcommand{\indphi}[1]{\restr{\phi}{#1}}
\newcommand{\abs}[1]{\left\vert#1\right\vert}
\newcommand{\coleq}{\coloneqq}
\newcommand{\soc}{\mathrm{soc}}
\newcommand{\genb}[2]{\left\langle#1 \;\middle|\; #2\right\rangle}
\newcommand{\nth}[1]{\ensuremath{{#1}^{\mathrm{th}}}}
\renewcommand{\algref}[1]{Algorithm~\ref{alg:#1}}
\newcommand{\linref}[1]{line~\ref{line:#1}}
\newcommand{\thmref}[1]{Theorem~\ref{thm:#1}}
\newcommand{\lemref}[1]{Lemma~\ref{lem:#1}}
\newcommand{\eqref}[1]{(\ref{eq:#1})}
\newcommand{\bmg}{\mathbf{g}}
\newcommand{\bmh}{\mathbf{h}}
\newcommand{\bmr}{\mathbf{r}}
\newcommand{\bms}{\mathbf{s}}
\newcommand{\bmu}{\mathbf{u}}
\newcommand{\bmv}{\mathbf{v}}
\newcommand{\ra}{\rightarrow}
\newcommand{\tril}{\triangleleft}
\def\thm@space@setup{\thm@preskip=3pt \thm@postskip=3pt}
\renewenvironment{proof}[1][\proofname]{\par
  \pushQED{\qed}
  \normalfont
  \topsep3pt \partopsep0pt % No space before
  \trivlist
  \item[\hskip\labelsep
        \itshape
    #1\@addpunct{.}]\ignorespaces
  }{
    \popQED\endtrivlist\@endpefalse
    \addvspace{0pt plus 0pt} % No space after
  }
\ifundef{\dontnumberwithin}{\declaretheorem[numberwithin=section]{dummy}}{\declaretheorem{dummy}} % Using dummy resolves some incompatibility issues with certain documentclasses (e.g. 
\declaretheorem[sibling=dummy]{theorem}
\declaretheorem[sibling=dummy]{lemma}
\ifundef{\defaultthmcontinues}{\renewcommand{\thmcontinues}[1]{}}{}
\newcommand{\bigenenum}{\text{BIDIRECTIONAL-GENERATOR-ENUMERATION}}
\newcommand{\insexts}{\text{INSERT-EXTENSIONS}}
\newcommand{\genenum}{\text{GENERATOR-ENUMERATION}}
\newcommand{\compsA}{\text{COMPOSITION-SERIES-ALICE}}
\newcommand{\compsB}{\text{COMPOSITION-SERIES-BOB}}
\newcommand{\comps}{\text{COMPOSITION-SERIES}}
\newcommand{\mns}{\text{MINIMAL-NORMAL-SUBGROUPS}}
\newcommand{\sims}{\text{SIMPLE-SUBGROUPS}}
\begin{document}

\title{Bidirectional Collision Detection and Faster Deterministic Isomorphism Testing}
\author{David J. Rosenbaum \\ University of Washington, \\ Department of Computer Science \& Engineering \\ Email: djr@cs.washington.edu}
\date{May 16, 2013}

\maketitle
\thispagestyle{empty}

\begin{abstract}
  In this work, we introduce bidirectional collision detection --- a new algorithmic tool that applies to the collision problems that arise in many isomorphism problems.  For the group isomorphism problem, we show that bidirectional collision detection yields a deterministic $n^{(1 / 2) \log n + O(1)}$ time algorithm whereas previously the $n^{\log n + O(1)}$ generator-enumeration algorithm was the best result for several decades.  For the hard special case of solvable groups, we combine bidirectional collision detection with methods from the author's previous work to obtain a deterministic square-root speedup over the best previous algorithm.  We also show a deterministic square-root speedup over the best previous algorithm for testing isomorphism of rings.  We can even apply bidirectional collision detection to the graph isomorphism problem to obtain a deterministic $T^{1 / \sqrt{2}}$ speedup over the best previous deterministic algorithm.  Although the space requirements for our algorithms are greater than those for previous deterministic isomorphism tests, we show time-space tradeoffs that interpolate between the resource requirements of our algorithms and previous work.
\end{abstract}

\inlong{
  \newpage
  \setcounter{page}{1}}

\section{Introduction}
%In an isomorphism problem, we are given two algebraic or combinatorial objects and must decide if they have the same structure.  Many isomorphism problems can be thought of as collision problems.  For example, determining if two groups $G$ and $H$ are isomorphic may be thought of as finding generating sets for $G$ and $H$ that are equivalent in a certain sense~\cite{rosenbaum2012b}.  Testing if two graphs $X$ and $Y$ are isomorphic can be thought of as searching for an equivalent pair of sequences of nodes for $G$ and $H$.  While one can obtain speedups for these problems using random guessing combined with a birthday argument (see~\cite{rosenbaum2012b} for the case of groups), these collision problems exhibit a special ``tree-like'' isomorphism-respecting structure that is not fully utilized by these randomized algorithms.  The main contribution of the present work is \emph{bidirectional collision detection} --- a new technique for obtaining speedups deterministically by exploiting the aforementioned ``tree-like'' structure that arises in the corresponding collision problems.

In an isomorphism problem, we are given two algebraic or combinatorial objects and must decide if they have the same structure.  In this work, we study several different isomorphism problems and show speedups over the best previous deterministic algorithms.

We start with the \emph{group isomorphism problem} where we are given multiplication tables representing two groups $G$ and $H$ of order $n$ and must decide if $G \cong H$.  While polynomial-time algorithms are known for a number of special cases of group isomorphism~\cite{lipton1977a,savage1980a,vikas1996a,kavitha2007a,legall2008a,qiao2011a,babai2011a,codenotti2011a,babai2012a,babai2012b}, the $n^{\log_p n + O(1)}$ generator-enumeration algorithm~\cite{felsch1970a,lipton1977a} (cf.~\cite{miller1978a}) has been the most efficient method for testing isomorphism of general groups since its discovery in the 1970's.  The algorithm is based on the fact that every group of $n$ elements has an ordered generating set $\bmg$ of size at most $\log_p n$ (where $p$ is the smallest prime dividing the order of the group).  Since any isomorphism from $G$ to $H$ can be specified by the images of the elements of any ordered generating set of $G$, one can determine if $G$ and $H$ are isomorphic by considering each of the $n^{\log_p n}$ maps from $\bmg$ into $H$ and testing in polynomial time if it specifies an isomorphism.  Improving the generator-enumeration algorithm for general groups was recently noted as a longstanding open problem~\cite{lipton2011b}.

% gave a deterministic square-root speedup over the generator-enumeration algorithm for the hard special cases $p$-groups and solvable groups and

In previous work~\cite{rosenbaum2012b}, the author showed a relationship between group isomorphism and collision detection and utilized it to obtain a randomized square-root speedup over the generator-enumeration algorithm\footnote{This relationship is in fact more general and also applies to many other isomorphism problems.}.  In this work, we introduce the \emph{bidirectional collision detection} framework --- a new technique for deterministically obtaining square-root speedups for isomorphism problems.  Since bidirectional collision detection in particular applies to the class of collision problems that arise in group isomorphism, we obtain a deterministic square-root speedup over the best previous algorithm for general groups.

%We exploit this relationship in the present work by introducing additional tricks to obtain the method of bidirectional collision detection.  

%While polynomial-time algorithms are known for a number of special cases~\cite{lipton1977a,savage1980a,vikas1996a,kavitha2007a,legall2008a,qiao2011a,babai2011a,codenotti2011a,babai2012a,babai2012b}, the $n^{\log_p n + O(1)}$ generator-enumeration algorithm~\cite{felsch1970a,lipton1977a} (cf.~\cite{miller1978a}) \footnote{Here, $p$ is the smallest prime dividing the order of the group.} has been the most efficient method for testing isomorphism of general groups since since its discovery in the 1970's.  By applying bidirectional collision detection to group isomorphism, we obtain a deterministic square-root speedup over the generator-enumeration algorithm.

%The generator-enumeration algorithm is based on the fact that every group with $n$ elements has an ordered generating set $\bmg$ of size at most $\log_p n$ (where $p$ is the smallest prime dividing the order of the group).  Since any isomorphism from $G$ to $H$ can be specified by the images of the elements of any ordered generating set of $G$, one can determine if $G$ and $H$ are isomorphic by considering each of the $n^{\log_p n}$ maps from $\bmg$ into $H$ and testing in polynomial-time if it specifies an isomorphism.

\begin{restatable}{theorem}{gengroupiso}
  \label{thm:gen-group-iso}
  General group isomorphism is in $n^{(1 / 2) \log_p n + O(1)}$ deterministic time where $p$ is the smallest prime dividing the order of the group.
\end{restatable}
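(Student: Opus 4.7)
The plan is to apply the bidirectional collision detection framework promised in the abstract to the generator-enumeration algorithm. Start from the standard fact that any group of order $n$ admits an ordered generating sequence $\bmg = (\bmg_1, \ldots, \bmg_k)$ of length $k \le \log_p n$ in which each $\bmg_i$ lies outside $\langle \bmg_1, \ldots, \bmg_{i-1} \rangle$; such a sequence can be computed in polynomial time, and any isomorphism $\phi : G \to H$ is determined by its image $\bmh \coleq \phi(\bmg) \in H^k$, which is what makes the $n^{\log_p n}$ enumeration work. I would split $\bmg$ into the first half $\gfirst$ and the second half $\gsecond$, so that a candidate isomorphism factors as a pair $(\hfirst, \hsecond) \in H^{k/2} \times H^{k/2}$ with at most $n^{(1/2)\log_p n}$ choices for each factor.

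The main step is a deterministic meet-in-the-middle. I would build two lists indexed respectively by $\hfirst$ and $\hsecond$, together with signature functions $\sigma_L$ and $\sigma_R$ chosen so that $\sigma_L(\hfirst) = \sigma_R(\hsecond)$ exactly when the pair $(\hfirst, \hsecond)$ extends the assignment $\bmg \mapsto (\hfirst, \hsecond)$ to an isomorphism $G \to H$. Concretely, one picks a fixed collection of ``cross-words'' in $\bmg$ sufficient to certify the homomorphism relations of $G$ and to witness bijectivity (for example, the structure constants of $G$ expressed in the normal form induced by $\bmg$), rearranges each as a word in $\gfirst$ followed by a word in $\gsecond$, and lets $\sigma_L$ record the partial evaluations in $H$ that depend only on $\hfirst$ while $\sigma_R$ records the corresponding complementary evaluations that depend only on $\hsecond$. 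Given such signatures, lex-sorting the two lists and scanning for equal keys finds every candidate pair deterministically in time $n^{(1/2)\log_p n + O(1)}$; each surviving pair is then validated by the polynomial-time isomorphism check on an explicit $\bmg \mapsto \bmh$, so the overall running time stays within the target bound.

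The hard part will be designing $\sigma_L$ and $\sigma_R$ so that the isomorphism predicate factorizes cleanly across the split: each signature must (a) be computable from one half-map alone, (b) fit in polynomially many bits so that lex-sorting incurs only a logarithmic factor, and (c) be complete in the sense that collisions correspond exactly to pairs extending to isomorphisms and nothing else, so that no spurious pair survives to cause a super-polynomial blow-up in the verification step. Once these signatures are in place, the remainder is routine: enumerate the $2 \cdot n^{(1/2)\log_p n}$ half-maps, compute and sort their signatures in $n^{(1/2)\log_p n + O(1)}$ deterministic time, verify each colliding pair in polynomial time, and return \emph{isomorphic} iff at least one pair survives, proving \thmref{gen-group-iso}.
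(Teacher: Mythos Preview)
Your proposal has a genuine gap at exactly the point you flag as ``the hard part'': the signatures $\sigma_L$ and $\sigma_R$ cannot be built the way you sketch. You suggest rearranging each defining relation of $G$ as a word in $\gfirst$ followed by a word in $\gsecond$, so that its image in $H$ splits as $u(\hfirst)\cdot v(\hsecond)$. But $G$ is in general non-abelian, so a word like $g_1 g_k g_2 g_{k+1}$ simply does not rewrite as (word in the first half)$\cdot$(word in the second half). More broadly, whether $(\hfirst,\hsecond)$ extends to an isomorphism depends on how the subgroup $\langle\hfirst\rangle$ interacts with each element of $\hsecond$ inside $H$; no function of $\hfirst$ alone and no function of $\hsecond$ alone can capture that interaction. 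Without a concrete signature design your argument is just the statement of the theorem, and the naive design you propose is incorrect.

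The paper sidesteps this entirely by never splitting a candidate map into two halves. Instead, both lists $A$ and $B$ consist of \emph{complete} ordered generating sets, and the ``signature'' is an invariant of the full pair $(G,\bmg)$ (so computing it is trivially polynomial-time; this is \lemref{gen-coll}). The square-root saving comes from an asymmetric enumeration: for $G$, enumerate \emph{all} length-$d$ prefixes $\bmg_1$ (with $d=(1/2)\log_p n$) and extend each \emph{arbitrarily} to a full generating set, giving $|A|\le n^{d}$; for $H$, fix \emph{one} arbitrary length-$d$ prefix $\bmh_1$ and enumerate \emph{all} extensions, giving $|B|\le n^{\log_p n - d}$. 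If $\phi:G\to H$ is an isomorphism, then $\phi^{-1}(\bmh_1)$ is one of Alice's prefixes, her arbitrary extension of it maps under $\phi$ to one of Bob's extensions, and the two full generating sets collide. The point is that the meet-in-the-middle happens at the level of \emph{which prefixes and suffixes to enumerate}, not at the level of factorizing the isomorphism test itself.
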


The main result of~\cite{rosenbaum2012b} was a deterministic square-root speedup over the generator-enumeration algorithm for the hard special case of solvable group isomorphism; the present paper extends the improved bound of~\cite{rosenbaum2012b} to general groups.  Since the techniques used in~\cite{rosenbaum2012b} are independent from our bidirectional collision detection method, we can combine them with bidirectional collision detection to obtain a deterministic fourth-root speedup over the generator-enumeration algorithm for the class of solvable groups.

%, showed a relationship between group isomorphism and collision detection and utilized it to obtain a square-root speedup over the generator-enumeration algorithm for randomized algorithms.  In the present work, we introduce \emph{bidirectional collision detection} --- a new technique for deterministically obtaining square-root speedups.  Bidirectional collision detection is applicable to a family of collision detection problems that includes those which arise in group isomorphism.  We therefore obtain a square-root speedup over the best previous deterministic algorithm for general groups.

\begin{restatable}{theorem}{solgroupiso}
  \label{thm:sol-group-iso}
  Solvable-group isomorphism is in $n^{(1 / 4) \log_p n + O(\log n / \log \log n)}$ deterministic time where $p$ is the smallest prime dividing the order of the group.
\end{restatable}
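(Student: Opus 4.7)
The plan is to compose two independent square-root speedups: the deterministic algorithm of \cite{rosenbaum2012b} tailored to solvable groups, which already achieves an $n^{(1/2)\log_p n + O(\log n / \log \log n)}$ bound by exploiting the composition-series structure of solvable groups, and the bidirectional collision detection framework of this paper, which delivers an additional square-root factor for the collision problems underlying isomorphism testing.

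First, I would inspect the structure of the search performed by the \cite{rosenbaum2012b} solvable-groups algorithm. That algorithm enumerates a structured family of candidate partial isomorphisms whose total count is $n^{(1/2)\log_p n + O(\log n / \log \log n)}$ and tests each in polynomial time. The key point is that the enumeration decomposes naturally into two symmetric halves---corresponding to a partition of the polycyclic generating sequence used by the algorithm---so that finding an isomorphism reduces to locating a compatible pair of partial maps, one from each half.

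Next, I would recast this as a collision problem and apply the bidirectional collision detection framework of this paper (the same engine that drives \thmref{gen-group-iso}). On one side I enumerate the possible first-half partial homomorphisms from $G$; on the other side I enumerate the corresponding completions in $H$. Under the appropriate canonical form on partial homomorphisms, an isomorphism between $G$ and $H$ corresponds to a collision between the two lists, and bidirectional collision detection halves the exponent to $(1/4)\log_p n$, yielding the claimed time bound.

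The main obstacle will be verifying the interface between the two techniques: one must confirm that the canonicalization used in \cite{rosenbaum2012b} to certify extensions of partial isomorphisms fits the requirements of bidirectional collision detection, and that the two halves of the polycyclic generating sequence can be enumerated independently in a way that preserves the solvable-group invariants on which \cite{rosenbaum2012b} relies. I expect this compatibility to go through cleanly because both methods operate on polycyclic data admitting polynomial-time canonicalization; any per-step overhead and two-sided bookkeeping can then be absorbed into the $O(\log n / \log \log n)$ slack in the exponent.
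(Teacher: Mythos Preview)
Your high-level strategy---compose the composition-series speedup of \cite{rosenbaum2012b} with bidirectional collision detection---is exactly what the paper does. But the concrete mechanism you sketch diverges from the paper's in ways that matter.

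The paper does not split a polycyclic generating sequence and does not enumerate partial homomorphisms. The interface with \cite{rosenbaum2012b} is \lemref{p-sol-coll}: it asks for two sets $A$, $B$ of \emph{composition series} (for $G$ and $H$ respectively) such that any isomorphism $\phi$ carries some $S\in A$ onto some $S'\in B$; given such sets produced in time $n^{c\log_p n+O(1)}$, it delivers a solvable-group isomorphism test in time $n^{c\log_p n+O(\log n/\log\log n)}$. The bidirectional trick is then applied to the composition-series tree of \cite{rosenbaum2012b} (Algorithms \ref{alg:comps-A}--\ref{alg:comp-count}), not to a homomorphism search. So the collision is between composition series of $G$ and composition series of $H$, with \lemref{p-sol-coll} doing the actual matching; your picture of ``first-half partial homomorphisms from $G$'' versus ``completions in $H$'' is a different object and would need its own correctness argument.

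Two specific points you gloss over are genuinely nontrivial in the paper. First, the split point must be isomorphism-invariant so that Alice (working in $G$) and Bob (working in $H$) cut at the same depth; this is the content of \lemref{comp-t}, which computes an invariant $t(G)$ balancing the two subtrees to within a factor of $n$. Your proposal assumes the halves are ``symmetric,'' but the composition-series tree is far from uniform, and choosing the cut carelessly could leave one side with nearly all the weight. Second, the $O(\log n/\log\log n)$ term is not slack absorbing bookkeeping: it is exactly the overhead of the black-box reduction in \lemref{p-sol-coll} that passes from $p$-group composition-series data to solvable groups. Treating it as a catch-all for ``per-step overhead'' misidentifies where the cost comes from.
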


Bidirectional collision detection can also be applied to the ring isomorphism problem to obtain a square-root speedup over the natural analogue of the generator-enumeration algorithm for rings.

In the case of graph isomorphism, bidirectional collision detection can used to obtain a deterministic $T^{1 / \sqrt{2}}$ speedup over the previous best algorithm~\cite{babai1983b}.

%For quantum algorithms, we can get a $T^{0.95258}$ speedup over the best previous quantum algorithm.  Although this speedup is modest, previously the best quantum algorithm was based on a naive application of standard methods for quantum collision detection~\cite{bassard1997b} to the best classical algorithm.  Our improvement is therefore the first nontrivial quantum speedup for the graph isomorphism problem.

While most algorithms for isomorphism problems can be implemented in polynomial space, our algorithms require space roughly equal to their runtimes.  By breaking the underlying bidirectional collision problem up into blocks, we show that the generator-enumeration algorithm and our algorithm are extreme points of the time-space tradeoff $TS = n^{\log_p n + O(1)}$ for general group isomorphism.  We also generalize \thmref{sol-group-iso} to obtain a time-space tradeoff of $TS = n^{(1 / 2) \log_p n + O(\log n / \log \log n)}$ for solvable groups.  While randomized analogues of our algorithms exist, there is currently no benefit to using randomness as the time and space requirements are the same.  On the other hand, using a modification of the quantum algorithm for collision detection~\cite{bassard1997b}, we obtain quantum time-space tradeoffs of $T \sqrt{S} = n^{(1 / 2) \log_p n + O(1)}$ for general groups and $T \sqrt{S} = n^{(1 / 4) \log_p n + O(\log n / \log \log n)}$ for solvable groups.  Analogous time-space tradeoffs exist for rings and graphs. % Check quantum time space tradeoff for solvable groups.

We start by describing our bidirectional collision-detection speedup in \secref{collision}.  In \secref{gen-algorithm}, we combine bidirectional collision detection with the generator-enumeration algorithm to obtain our deterministic square-root speedup for general group isomorphism.  In \secref{ring-iso}, we discuss a deterministic speedup for the ring isomorphism problem.  We show a deterministic fourth-root speedup for solvable-group isomorphism in \secref{p-sol-algorithm}.  In \secref{graph-iso}, we show our speedups for the graph isomorphism problem.  We generalize our algorithms by giving time-space tradeoffs in \secref{time-space}.  We conclude with the current state of the art and open problems in \secref{conclusion}.

\section{Bidirectional collision detection}
\label{sec:collision}
In this section, we describe a class of collision problems for which a square-root speedup can be obtained deterministically over the naive brute force algorithm.  Later in our paper, we shall show how this speedup can be used to obtain faster deterministic algorithms for several isomorphism problems. % general groups, solvable groups and $p$-groups.

\subsection{Collision detection as a game}
Consider a game in which Alice and Bob have access to a tree $T$ with root $r$ via separate oracles.  For each node $x$, Alice and Bob have labels $\lambda_a(x)$ and $\lambda_b(x)$.  Assuming Alice knows her label $\lambda_a(x)$ for a node $x$, she can compute her labels for the children of $x$ using her oracle; similarly, if Bob knows his label $\lambda_b(x)$ for a node $x$, he can compute his labels for the children of a node $x$ using his oracle.  In our game, Alice and Bob each start with their label for the root node.  Alice then selects a set $A$ of simple paths from the root to the leaves specified in terms of her labels for the nodes.  Similarly, Bob selects a set $B$ of simple paths from the root to the leaves specified using his labels for the nodes.  No communication is allowed between Alice and Bob at any time.  Alice and Bob win if the sets $A$ and $B$ contain a common path (i.e., there exist $(v_1, \ldots, v_m) \in A$ and $(u_1, \ldots, u_m) \in B$ such that each $\lambda_a^{-1}(v_i) = \lambda_b^{-1}(v_i)$).  It is easy for Alice and Bob to win by choosing all possible paths from the root to the leaf nodes; however, they wish to win while also keeping the total number of paths $\abs{A} + \abs{B}$ as small as possible.

We consider a winning strategy for Alice and Bob.  First, we define several parameters that allow us to characterize the trees for which our strategy is efficient.  Let $N$ be the number of nodes in the tree $T$ and let $T_1$ be the subtree of $T$ consisting of nodes at a distance of at most $d$ from the root of $T$.  Suppose that $T$ has the properties that

\begin{enumerate}
\item there is at least one node in $T$ at a distance of $d$ from the root,
\item there are at most $f(N)$ nodes in the subtree $T_1$ and
\item every subtree $T_2$ rooted at a node of distance $d$ from the root contains at most $g(N)$ nodes
\end{enumerate}

For such trees, we show a winning strategy for Alice and Bob such that $\abs{A} + \abs{B} \leq f(N) + g(N)$.  We call our method bidirectional collision detection due to an analogy with bidirectional search\footnote{Bidirectional search is a method for finding a path which searches from both the source and destination vertexes and meets in the middle.}.  Alice starts by computing the set of all simple paths to leaf nodes of the subtree $T_1$; she then extends each of these paths to a simple path to a leaf node of $T$ arbitrarily.  Placing the resulting paths in $A$, we see that $\abs{A} \leq f(N)$.  Bob starts by selecting a simple path $\bmv = (v_1, \ldots, v_d)$ from the root to any node $v_d$ at distance $d$.  Clearly, $v_d$ is in the subtree $T_1$.  Bob proceeds by collecting all possible simple paths of minimal length from $v_d$ to leaf nodes of $T$ into the set $B'$.  By concatenating $\bmv$ with each path in $B'$, he obtains a set $B$ of paths from the root of $T$ to its leaves; $B$ has the property that $\abs{B} \leq g(N)$.

We now argue that Alice and Bob have a path in common.  It is clear that the path $\bmv$ used by Bob as a prefix of each of his paths agrees with one of Alice's paths $\bmu = (u_1, \ldots, u_m)$ on the first $d$ nodes since Alice considers all simple paths from the root to the leaf nodes of $T_1$.  Let $T_2$ be the subtree of $T$ rooted at $v_d$.  Since Bob considers all possible minimal length simple paths from $v_d$ to the leaves of $T$, it is clear that one of his paths will agree with Alice's path $\bmu$.  We have already seen that $\abs{A} + \abs{B} \leq f(N) + g(N)$.  Later, we shall see in~\secref{gen-algorithm} how bidirectional collision detection can be applied to obtain a square-root speedup for general group isomorphism.

We note that as described above, it is implicitly assumed that there is always a common path in Alice and Bob's labelings of the tree $T$.  In the problems we consider in this paper, Alice and Bob have labelings for the trees $T_a$ and $T_b$ respectively and are promised that either $T_a$ and $T_b$ have disjoint sets of leaf nodes or $T_a = T_b$.  As we shall see, this corresponds precisely to determining if there is an isomorphism.  Our task is therefore to decide if $T_a = T_b$.  We accomplish this using the strategy described above.  If $T_a = T_b$, then the sets $A$ and $B$ contain a common path.  Otherwise, the sets $A$ and $B$ are disjoint.  We can thus decide if $T_a = T_b$ by checking which of these is the case.

\subsection{A deterministic time-space tradeoff for collision-detection}
\label{ssec:d-time-space}
We now discuss time-space tradeoffs for collision problems.  First of all, it suffices to store only the leaf node at which each of the paths in $A$ and $B$ terminates and then test if the resulting sets of leaf nodes are disjoint; this yields a small reduction in the amount of space required.  The arguments of the preceding subsection then imply an $O((f(N) + g(N)) s)$ time algorithm (where $s$ is the amount of space required to store Alice or Bob's label for a node in the tree) for any collision problem that fits into the tree framework described.  (We shall henceforth refer to this as the \emph{bidirectional collision detection} algorithm.)  Since $f(N)$ and $g(N)$ will typically be $O(\sqrt{N} \poly(s))$, this constitutes a square-root speedup over the naive brute-force algorithm for many problems.  However, since bidirectional collision detection must store all of the leaf nodes that correspond to the paths in the sets $A$ and $B$, it requires $O((f(N) + g(N)) s)$ space whereas the brute-force algorithm typically needs only $\poly(s)$ space (which will normally be much smaller).

Both the naive brute-force algorithm and the bidirectional collision detection algorithm are special cases of a general time-space tradeoff.  Here, one arbitrarily breaks the sets $A$ and $B$ up into chunks of $\Delta$ elements; we then test if $A$ and $B$ contain a common element by checking if any of the at most $\frac{f(N) g(N)}{\Delta^2}$ pairs of chunks contain a common element.  This takes time $T = \frac{f(N) g(N)}{\Delta} \poly(s)$.  Here, we have assumed that enumerating the sets $A$ and $B$ can be done at a cost of $\poly(s)$ per element; this will be the case in all of the problems considered in this paper.  Since the amount of space required is $S = \Delta s$, we obtain a deterministic time-space tradeoff of $TS = f(N) g(N) \poly(s)$ where $S \leq \min\{f(N), g(N)\}$.  Both bidirectional collision detection and the brute-force algorithm are special cases of this time space tradeoff if we neglect small differences in the runtime (which correspond to lower order terms in the problems we consider).

We remark that there is currently no advantage for randomized algorithms over deterministic algorithms due to our bidirectional collision detection techniques.  The next model of computation to consider is therefore quantum algorithms. 

\subsection{A quantum time-space tradeoff for collision detection}
\label{ssec:q-time-space}
One can also obtain a quantum time space tradeoff for collision problems that fit into the tree framework.  Quantum algorithms for collision detection typically assume that there are many collisions whereas in our bidirectional collision detection strategy we only have a single collision.  We remedy this by considering the full tree for both Alice and Bob.  Fortunately, the $N^{1 / 3} \poly(s, \log N)$ runtime of this quantum algorithm~\cite{bassard1997b} is in terms of the size $N$ of a single tree rather than all pairs of nodes.  Thus, the quantum algorithm inherently handles the two separate copies of the tree without any degradation in performance.  While this algorithm uses $O(N^{1 / 3} \poly(s, \log N))$ space, modifying the algorithm to use only $S$ space in the obvious way results in a runtime of $T = \sqrt{N / S} \poly(s, \log N)$ where $S \leq N^{1 / 3}$.  Thus, we have a time-space tradeoff of $T \sqrt{S} = \sqrt{N} \poly(s, \log N)$ for quantum algorithms.

%we utilize algorithms for a variant of the element-distinctness problem instead.  Ambainis showed~\cite{ambainis2003a} that quantum algorithms can solve the element distinctness problem on $N$ elements in $T = O(N \poly(s) / \sqrt{S})$ time where $S \leq N^{2 / 3}$ is the amount of space used.  (Here, we again assume that one can generate each element in $\poly(s)$ time.)  By modifying this algorithm to fit our setting, we obtain a quantum time-space tradeoff of $T \sqrt{S} = O((f(N) + g(N)) \poly(s))$ where $S \leq (\min\{f(N), g(N)\})^{2 / 3}$.

\section{Bidirectional generator enumeration}
\label{sec:gen-algorithm}
In the author's previous paper~\cite{rosenbaum2012b}, a relationship between group isomorphism and the collision problem was given.  Once adapted to the present setting, this result may be stated as follows.

\begin{lemma}[\cite{rosenbaum2012b}]
  \label{lem:gen-coll}
  Let $G$ and $H$ be groups and suppose that we have a deterministic algorithm with the following properties:

  \begin{enumerate}
  \item The algorithm either decides if $G \cong H$ or outputs two sets $A$ and $B$.  The set $A$ contains ordered generating sets for $G$ and the set $B$ contains ordered generating sets for $H$.
  \item The algorithm runs in time $t(n)$.
  \item If $\phi : G \ra H$ is an isomorphism, then there exist $\bmg \in A$ and $\bmh \in B$ such that $\phi(\bmg) = \bmh$.
  \end{enumerate}

  Then we can construct a deterministic algorithm that decides if two groups $G$ and $H$ are isomorphic in $t(n) \poly(n)$ time.
\end{lemma}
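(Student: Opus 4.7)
The plan is to invoke the hypothesized algorithm and then, if it does not already decide the question, to treat the lists $A$ and $B$ as a catalogue of candidate generator correspondences and to check whether any one of them extends to an isomorphism by comparing canonical forms via a sort-and-merge.

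First, run the given algorithm in time $t(n)$. If it outputs a direct verdict on whether $G \cong H$, we return that verdict. Otherwise we obtain sets $A$ and $B$ whose sizes are each at most $t(n)$ (since emitting each tuple costs at least one time unit), and by hypothesis every $\bmg \in A$ is an ordered generating tuple of $G$ and every $\bmh \in B$ is an ordered generating tuple of $H$.

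Next, for each $\bmg = (g_1, \ldots, g_k) \in A$ we compute a canonical invariant $\sigma(G, \bmg)$ in $\mathrm{poly}(n)$ time, and likewise $\sigma(H, \bmh)$ for each $\bmh \in B$. A concrete choice is the multiplication table of $G$ in the canonical element ordering induced by $\bmg$: perform a breadth-first search starting at the identity, applying $g_1, g_2, \ldots, g_k$ in order at each step to discover unseen elements, record the resulting bijection $\iota : G \to \{1, \ldots, n\}$, and output the $n \times n$ table whose $(i,j)$-entry is $\iota(\iota^{-1}(i) \cdot \iota^{-1}(j))$. This enumeration depends only on the abstract pair $(G, \bmg)$, so any isomorphism $\phi$ with $\phi(\bmg) = \bmh$ identifies the two canonical orderings position by position, forcing $\sigma(G, \bmg) = \sigma(H, \bmh)$; conversely, matching tables directly exhibit a multiplication-respecting bijection $G \to H$ sending each $g_i$ to $h_i$.

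Finally, sort the two resulting lists of invariants and scan for a common entry. If $G \cong H$, the pair guaranteed by hypothesis~(3) produces matching invariants and is detected; if no invariants coincide, no pair in $A \times B$ extends to an isomorphism, so by hypothesis~(3) no isomorphism exists at all. The total running time is $t(n) + (|A| + |B|) \cdot \mathrm{poly}(n) = t(n) \cdot \mathrm{poly}(n)$, as required. There is no genuine obstacle: the polynomial-time canonical form is a standard construction, already implicit in the classical generator-enumeration algorithm; the only conceptual point is that a naive pairwise check would cost $|A| \cdot |B| \cdot \mathrm{poly}(n) = t(n)^2 \cdot \mathrm{poly}(n)$, which is too expensive, so we must route the comparison through sorted canonical fingerprints to keep the overhead linear in $|A| + |B|$.
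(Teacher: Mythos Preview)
Your proposal is correct. The paper does not actually prove this lemma in-text; it is imported from the author's prior work~\cite{rosenbaum2012b} and stated without proof, so there is no ``paper's own proof'' to compare against. That said, your argument is exactly the standard one underlying the cited result: map each ordered generating tuple to the multiplication table of the group written in the canonical element ordering it induces, then detect a collision between the two lists of tables by sorting rather than by pairwise comparison. The two correctness directions (a matched pair of tables yields an explicit isomorphism; condition~(c) guarantees a match whenever $G\cong H$) and the cost accounting ($|A|,|B|\le t(n)$, each canonical form computable in $\poly(n)$, sort-and-scan in $(|A|+|B|)\poly(n)$) are all handled cleanly. Your closing remark that the naive $|A|\cdot|B|$ pairwise check would be too expensive is precisely the point of routing through canonical forms, and mirrors how the cited work uses this lemma.
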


Our goal shall therefore be to exhibit an algorithm that satisfies the conditions of~\lemref{gen-coll} and runs in $n^{(1 / 2) \log_p n + O(1)}$ time where $p$ is the smallest prime which divides the order of the group.  We accomplish this by applying bidirectional collision detection to the problem of constructing an ordered generating set.

Consider two groups $G$ and $H$ and let $p$ be the smallest prime which divides the order of the groups.  We note that the crucial condition in \lemref{gen-coll} is (c); moreover, to satisfy (c), it suffices to consider the case where $G \cong H$.  We may view this as the case where there is an abstract group $K$ that is isomorphic to the groups $G$ and $H$ (which are specified by their multiplication tables).  We now relate this to the bidirectional collision framework given in \secref{collision} by defining the variables used in that section with respect to the present context.  We start by setting the distance parameter $d = (1 / 2) \log_p n$.  The nodes of the tree $T$ correspond to ordered generating sets of subgroups of $K$.  The root node of $T$ is the empty ordered generating set $()$; the children of the node $(x_1, \ldots, x_j) \in K^j$ are the vectors $(x_1, \ldots, x_{j + 1})$ with $x_{j + 1} \in K \setminus \genb{x_i}{1 \leq i \leq j}$.  The leaf nodes are the nodes $(x_1, \ldots, x_j) \in K^j$ such that $\genb{x_i}{1 \leq i \leq j} = K$.  In this way, descending from the root node of $T$ corresponds to constructing an ordered generating set of $K$.

\begin{algorithm}[H]
  \begin{algorithmic}[1]
    \Input Two groups $G$ and $H$ specified by their multiplication tables
    \Output Either decide if $G \cong H$ or return sets $A$ and $B$ that satisfy the conditions of \lemref{gen-coll}
    \Function{$\bigenenum$}{$G, H$}
      \State Let $p$ be the smallest prime that divides $n$ \label{line:set-p}
      \State $d \coleq (1 / 2) \log_p n$
      \If{$G$ has a generating set of size at most $d$} \label{line:gen-set-if}
        \State \Return $\Call{\genenum}{G, H}$
      \ElsIf{$H$ has a generating set of size at most $d$}
        \State \Return ``$G \not\cong H$''
      \EndIf \label{line:end-gen-set-if}
      \State $A' \coleq \{\}$ \label{line:set-A'}
      \State \Call{\insexts}{$(), G, A', 0, d$} \label{line:exts-G}
      \State $A \coleq \{\}$ \label{line:set-A}
      \For{$\bmg_1 \in A'$} \label{line:g1-loop}
        \State $\bmg_2 \coleq ()$
        \Repeat \label{line:g2-loop}
          \State Choose an arbitrary $g \in G \setminus \langle \bmg \rangle$
          \State $\bmg_2 \coleq \bmg_2 \concat (g)$
          \State $\bmg \coleq \bmg_1 \concat \bmg_2$
        \Until{$G = \langle \bmg \rangle$}
        \State Insert $\bmg$ into $A$ 
      \EndFor
      \State $\bmh_1 = ()$
      \For{$i = 1, \ldots, d$} \label{line:h1-loop}
        \State Choose an arbitrary $h \in H \setminus \langle \bmh_1 \rangle$
        \State $\bmh_1 \coleq \bmh_1 \concat (h)$
      \EndFor
      \State{$B \coleq \{\}$}
      \State \Call{\insexts}{$\bmh_1, H, B, 0, \infty$} \label{line:exts-H}
      \State Output $A$ and $B$
    \EndFunction
  \end{algorithmic}
  \caption{An algorithm for general group isomorphism}
  \label{alg:gen-AB}
\end{algorithm}

As in \secref{collision}, the nodes of the tree $T$ are not directly accessible.  However, we have access to $T$ through Alice and Bob's labelings $\lambda_a : K^* \ra G^*$ and $\lambda_b : K^* \ra H^*$.  Here, there are isomorphisms $\phi_a : K \ra G$ and $\phi_b : K \ra H$ such that $\lambda_\ell(x_1, \ldots, x_j) = (\phi_\ell(x_1), \ldots, \phi_\ell(x_j))$ for $\ell \in \{a, b\}$.  Note that there are less than $f(n) = n^{(1 / 2) \log_p n + O(1)}$ nodes at a distance of at most $d$ from the root of $T$.  Similarly, each subtree $T_2$ rooted at a node at distance $d$ from the root contains at most $g(n) = n^{(1 / 2) \log_p n + O(1)}$ nodes.  The argument of the previous section then implies that bidirectional collision detection yields sets $A$ and $B$ that satisfy the requirements of \lemref{gen-coll}.  In this way we obtain an $n^{(1 / 2) \log_p n + O(1)}$ algorithm for testing isomorphism of general groups.  The pseudocode for computing $A$ and $B$ is shown in \algref{gen-AB}; the symbol $\concat$ denotes vector concatenation.

\begin{algorithm}[H]
  \begin{algorithmic}[1]
    \Input A vector $\bmg_1 = (g_1, \ldots, g_k)$ of elements of a group $g$, a set $A$, the number $j$ of elements added to $\bmg_1$ so far and a limit $m$ on the maximum number of elements to add to the vector $\bmg_1$
    \Ensure All vectors of the form $\bmg = \bmg_1 \concat \bmg_2$ where $\ell \leq m$ and $\bmg_2 = (g_{k + 1}, \ldots, g_{k + \ell})$ is a vector of elements in $G$ with the property that $\langle \bmg_1, g_{k + 1}, \ldots, g_{k + j} \rangle < \langle \bmg_1, g_{k + 1}, \ldots, g_{k + j + 1} \rangle$ for all $0 \leq j < \ell$ are added to the set $A$
    \Function{\insexts}{$\bmg_1, G, A, j, m$}
      \If{$\langle \bmg_1 \rangle < G$ and $j < m$}
        \For{$g \in G \setminus \langle \bmg_1 \rangle$}
          \State \Call{\insexts}{$\bmg_1 \concat (g), G, A, j + 1, m$}
        \EndFor
      \Else
        \State Insert $\bmg_1$ into $A$
      \EndIf
    \EndFunction
  \end{algorithmic}
  \caption{An algorithm for computing all extensions of a generating set}
  \label{alg:ins-exts}
\end{algorithm}

Combined with \lemref{gen-coll}, this yields our deterministic square-root speedup for general group isomorphism.

\gengroupiso*

\begin{proof}
  In order to prove this result, we show that \algref{gen-AB} satisfies the conditions of \lemref{gen-coll}.  It is clear that if \algref{gen-AB} returns before reaching \linref{set-A'} then it returns the correct result.  From now on, suppose that it reaches \linref{set-A'}.  In this case, we see that \algref{gen-AB} returns a set $A$ containing ordered generating sets for $G$ and a set $B$ containing ordered generating sets for $H$.  Thus, conditions (a) and (b) are satisfied for some $t(n)$.  We also know that every generating set for the groups $G$ and $H$ has size greater than $(1 / 2) \log_p n$.

  Suppose that $\phi : G \ra H$ is an isomorphism and consider condition (c) of \lemref{gen-coll}.  Consider the value of $\bmh_1 = (h_1, \ldots, h_d)$ after the loop on \linref{h1-loop} terminates.  Since $\bmh_1$ is of length $d$, and satisfies the property that $\langle h_1, \ldots, h_k \rangle < \langle h_1, \ldots, h_{k + 1} \rangle$ for all $k < d$, the $\Call{\insexts}{}$ call on \linref{exts-G} will insert some $\bmg_1 = (g_1, \ldots, g_d)$ into the set $A'$ such that $\phi(\bmg_1) = \bmh_1$.  The loop on \linref{g1-loop} constructs a new set $A$ from $A'$ by extending each vector $\bmg_1$ in $A'$ to an ordered generating set $\bmg = \bmg_1 \concat \bmg_2$ where $\bmg_2 = (g_{d + 1}, \ldots, g_\ell)$ is such that $\langle g_1, \ldots, g_k \rangle < \langle g_1, \ldots, g_{k + 1} \rangle$ for all $k < \ell$.  Then the call to $\Call{\insexts}{}$ on \linref{exts-H} will insert some $\bmh = \bmh_1 \concat \bmh_2$ into the set $B$ such that $\phi(\bmg) = \bmh$.  Thus, condition (c) of \lemref{gen-coll} is satisfied.

  Since all the conditions of \lemref{gen-coll} are satisfied, we obtain an algorithm for group isomorphism.  The only issue that remains is to determine its runtime.  Lines \ref{line:set-p} -- \ref{line:end-gen-set-if} can be performed using brute force in $n^{d + O(1)}$ time.  The call to $\Call{\insexts}{}$ on \linref{exts-G} also takes $n^{d + O(1)}$ time.  The loop on \linref{g1-loop} takes polynomial time for each element of the set $A'$ which contains at most $n^d$ elements; thus it requires at most $n^{d + O(1)}$ time.  The loop on \linref{h1-loop} takes only polynomial time.  Finally, the call to $\Call{\insexts}{}$ on \linref{exts-H} again takes at most $n^{d + O(1)}$ time.  The runtime of \algref{gen-AB} is therefore $n^{d + O(1)} = n^{(1 / 2) \log_p n + O(1)}$.  By \lemref{gen-coll}, the overall algorithm for group isomorphism also takes $n^{(1 / 2) \log_p n + O(1)}$ time.
\end{proof}

\section{Bidirectional collision detection for ring isomorphism}
\label{sec:ring-iso}
We now consider the \emph{ring isomorphism problem}.  Here the rings are specified by tables describing the addition and multiplication relations and we must decide if they are isomorphic.  As usual, we assume that the two rings have the same order and let $n$ denote the number of elements in each of the rings.  While addition in a ring is always commutative, we do not require that the multiplication operation is commutative.

We start by describing a naive method for solving ring isomorphism.  Let $R$ and $S$ be rings and let $p$ be the smallest prime dividing the order of $R$.  We fix an ordered generating set $\bmr$ of size at most $\log_p n$ for the additive group of the ring $R$ and then consider each ordered generating set $\bms$ of size at most $\log_p n$ for the additive group of the ring $S$.  For each such $\bms$, we check if the mapping from $\bmr$ to $\bms$ induces ring isomorphism between $R$ and $S$.  Since we have the naive bound of at most $n^{\log_p n}$ choices for $\bms$, this yields an $n^{\log_p n + O(1)}$ algorithm for testing isomorphism of rings.  By applying bidirectional collision detection, we obtain a square-root speedup which yields an $n^{(1 / 2) \log_p n + O(1)}$ algorithm for testing isomorphism of rings.

% show more careful calculation in appendix?
\begin{theorem}
  \label{thm:d-ring-iso}
  Ring isomorphism is in $n^{(1 / 2) \log_p n + O(1)}$ deterministic time where $p$ is the smallest prime dividing the order of the ring.
\end{theorem}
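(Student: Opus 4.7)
The plan is to mirror the proof of \thmref{gen-group-iso}, replacing the role of group generating sets with additive generating sets of rings. The first step is to establish the ring analogue of \lemref{gen-coll}: given a procedure that outputs a set $A$ of ordered additive generating sets of $R$ and a set $B$ of ordered additive generating sets of $S$ with the guarantee that, whenever $R \cong S$ as rings, some pair $\bmr \in A$, $\bms \in B$ is related by an actual ring isomorphism $\phi(\bmr) = \bms$, one can decide ring isomorphism in time polynomial in $n(\abs{A} + \abs{B})$ by collision detection. The verification step here differs slightly from the group case: for each candidate pair $(\bmr,\bms)$, one extends the partial map additively (which is determined since $\bmr$ additively generates $R$) and then checks whether the extension also respects multiplication on the table of products $r_i r_j$; both operations take polynomial time.

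Next, I would set $d = (1/2) \log_p n$ and build the bidirectional tree exactly as in \secref{gen-algorithm}, except that the nodes at depth $j$ correspond to $j$-tuples $(x_1,\dots,x_j)$ which additively generate strictly increasing additive subgroups of an abstract group $K$ isomorphic to the common additive group of $R$ and $S$. Since the additive group has order $n$ with smallest prime divisor $p$, every ascending chain of additive subgroups of $K$ has length at most $\log_p n$, so the truncated tree $T_1$ of depth $d$ has at most $n^d$ nodes and every subtree $T_2$ rooted at depth $d$ again has at most $n^{d + O(1)}$ nodes. Alice enumerates all depth-$d$ partial generating sequences in $R$ and extends each arbitrarily to a full additive generating sequence, producing $A$ with $\abs{A} \le n^{d + O(1)}$; Bob chooses any single depth-$d$ additive generating sequence $\bms_1$ of $S$ and enumerates all of its minimal-length additive extensions to a full generating set, giving $B$ with $\abs{B} \le n^{d + O(1)}$. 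An algorithm analogous to \algref{gen-AB} and \algref{ins-exts} produces these sets in time $n^{d + O(1)}$.

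The main thing to verify is condition (c) of the ring analogue of \lemref{gen-coll}, i.e.\ that $A$ and $B$ collide under any ring isomorphism $\phi : R \to S$. Since $\phi$ in particular restricts to an additive group isomorphism, the argument from the proof of \thmref{gen-group-iso} applies verbatim to the additive structure: Bob's depth-$d$ sequence $\bms_1$ pulls back under $\phi$ to some sequence $\bmr_1 = \phi^{-1}(\bms_1)$, which must be inserted into Alice's intermediate set $A'$ by the $\insexts$ call, and Alice's arbitrary extension $\bmr = \bmr_1 \concat \bmr_2$ to a full additive generating set is then matched by $\phi(\bmr) \in B$ since Bob enumerates all minimal-length additive extensions of $\bms_1$. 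This is the only subtle point, and it goes through because generation in a ring and in its underlying additive group coincide for our purposes, so that the tree analysis transfers without modification. Combining with the ring analogue of \lemref{gen-coll} yields the claimed $n^{(1/2) \log_p n + O(1)}$ deterministic algorithm.
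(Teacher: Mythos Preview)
Your proposal is correct and follows the same approach the paper sketches: use ordered generating sets for the \emph{additive} group of the ring as the tree nodes, apply bidirectional collision detection exactly as in \secref{gen-algorithm}, and at the verification stage check that the induced additive map also respects multiplication. The paper does not give a formal proof of \thmref{d-ring-iso} beyond this outline, and your write-up is a faithful expansion of it.
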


Replacing bidirectional collision detection with quantum claw detection~\cite{bassard1997b} yields an $n^{(1 / 3) \log_p n + O(1)}$ time quantum algorithm for testing ring isomorphism.

\begin{theorem}
  \label{thm:q-ring-iso}
  Ring isomorphism is in $n^{(1 / 3) \log_p n + O(1)}$ quantum time where $p$ is the smallest prime dividing the order of the ring.
\end{theorem}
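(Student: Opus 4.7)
The approach will directly mirror the proof of Theorem \ref{thm:d-ring-iso}, substituting the quantum claw-detection algorithm of~\cite{bassard1997b} for bidirectional collision detection, just as done for general groups in Section~\ref{ssec:q-time-space}. The first step is a ring-analogue of Lemma~\ref{lem:gen-coll}: produce sets $A$ and $B$ of ordered additive generating sets of $R$ and $S$ such that, whenever $R \cong S$, some $\bmr \in A$ and $\bms \in B$ have the property that $\bmr \mapsto \bms$ extends to a ring isomorphism. This suffices to decide ring isomorphism because verifying any candidate pair takes polynomial time: an additive generating set together with the addition and multiplication tables of $R$ determines at most one ring homomorphism into $S$, and one simply checks whether that map preserves multiplication.

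I would then set up a tree $T$ exactly as in Section~\ref{sec:gen-algorithm}. Let $K$ be the abstract ring promised to be isomorphic to both $R$ and $S$. Nodes of $T$ are ordered tuples $(x_1, \ldots, x_j) \in K^j$, and the children of such a node are obtained by appending any element outside the additive subgroup generated by the $x_i$; leaves are the complete additive ordered generating sets of $K$. Alice and Bob hold labelings $\lambda_a, \lambda_b$ induced by hidden additive-group isomorphisms $K \ra R$ and $K \ra S$. A labeled-leaf claw corresponds to a candidate pair $(\bmr, \bms)$, which can then be verified in polynomial time against the multiplication tables. Since every additive generating set of $K$ has size at most $\log_p n$, the tree has $N = n^{\log_p n + O(1)}$ nodes, each describable in $s = O(\log n)$ bits.

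Running the quantum claw-finding algorithm of~\cite{bassard1997b} on these two labelings finds a claw (when one exists) in $N^{1/3} \cdot \mathrm{poly}(s, \log N)$ quantum time, as recalled in Section~\ref{ssec:q-time-space}; substituting the bounds above gives $n^{(1/3) \log_p n + O(1)}$ total quantum time. Each oracle call simulating Alice's or Bob's labeling adds only polynomial overhead, amounting to a multiplication-table lookup plus an additive-subgroup membership test. The main obstacle, as in the deterministic case, is the set-up rather than the algorithm: one has to verify that the extra multiplicative structure of the rings affects only the per-leaf polynomial-time verification and not the tree traversal, so that the quantum claw-finding subroutine can be invoked as a black box exactly as in the group case. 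Once this reduction is in place, the runtime bound follows immediately.
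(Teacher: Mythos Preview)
Your proposal is correct and follows the paper's approach exactly: the paper's entire justification is the single sentence ``Replacing bidirectional collision detection with quantum claw detection~\cite{bassard1997b} yields an $n^{(1/3)\log_p n + O(1)}$ time quantum algorithm for testing ring isomorphism,'' and you have spelled out precisely that substitution. One small quibble: a node of the tree is a tuple of up to $\log_p n$ elements, so $s = O(\log^2 n)$ rather than $O(\log n)$, but this is absorbed into the $\poly(s,\log N)$ factor and does not affect the bound.
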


\section{Bidirectional collision detection for $p$-groups and solvable groups}
\label{sec:p-sol-algorithm}
In this section, we show deterministic fourth-root speedups over the generator-enumeration algorithm for $p$-groups and solvable groups; this improves on the author's previous work~\cite{rosenbaum2012b} which gave a square-root speedup.  We start by formulating the conditions that must be satisfied in order for us to obtain a fourth-root speedup.  To this end, we present the following lemma which is a consequence of results in~\cite{rosenbaum2012b}.

\begin{lemma}[\cite{rosenbaum2012b}]
  \label{lem:p-sol-coll}
  Let $G$ and $H$ be $p$-groups and suppose that we have a deterministic algorithm with the following properties:

  \begin{enumerate}
  \item The algorithm either decides if $G \cong H$ or outputs two sets $A$ and $B$ that contain composition series for $G$ and $H$ respectively.
  \item The algorithm runs in time $n^{c \log_p n + O(1)}$ for some constant $c > 0$.
  \item If $\phi : G \ra H$ is an isomorphism, then there exist composition series $S$ in $A$ and $S'$ in $B$ such that $\phi$ is an isomorphism from $S$ to $S'$.
  \end{enumerate}

  Then we can construct deterministic algorithms that can decide isomorphism of

  \begin{itemize}
  \item $p$-groups in $n^{c \log n + O(1)}$ time.
  \item solvable groups in $n^{c \log_p n + O(\log n / \log \log n)}$ time where $p$ is now the smallest prime dividing $n$.
  \end{itemize}
\end{lemma}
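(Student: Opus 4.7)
The plan is to apply the composition-series comparison and reduction machinery of~\cite{rosenbaum2012b} to the sets $A$ and $B$ delivered by the hypothesized algorithm. First I would run the hypothesized algorithm, which either settles isomorphism directly or produces $A$ and $B$ in time $n^{c \log_p n + O(1)}$. In the latter case, hypothesis~(c) reduces the question ``$G \cong H$?'' to a collision problem: is there some pair $(S, S') \in A \times B$ together with an isomorphism $\phi : G \to H$ that sends $S$ to $S'$?

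For the $p$-group case, I would invoke the composition-series canonicalization from~\cite{rosenbaum2012b}. For each chain in $A$ (respectively $B$), compute a canonical label that depends only on its isomorphism type as a filtered group, so that two chains admit a matching $\phi$ iff their labels coincide. Sorting and matching the labels of $A$ against those of $B$ then decides the question. The per-chain canonicalization cost and the $n^{c \log_p n + O(1)}$ bound on $|A|+|B|$ combine to give the claimed $n^{c \log n + O(1)}$ runtime for $p$-groups.

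For the solvable-group case, I would reduce to the $p$-group setting by processing a composition series of $G$ (and $H$) level by level. At each level we obtain an elementary abelian section to which the $p$-group argument above can be applied; the gluing across the distinct primes dividing $|G|$ and the recursion on the chain length contribute the $O(\log n / \log \log n)$ overhead, yielding the stated $n^{c \log_p n + O(\log n / \log \log n)}$ bound with $p$ now the smallest prime dividing $n$.

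The main technical obstacle is the canonicalization of a composition series: one needs a label that is polynomial-time computable per chain and faithful to the isomorphism type of the filtered group, which requires careful handling of the automorphism groups of each composition factor together with the extension data linking successive factors. Once this canonicalization is in place, the reduction of the pair-comparison over $A \times B$ to a sort-and-match step, as well as the layered reduction from solvable groups to $p$-group subproblems, follow the template already established in~\cite{rosenbaum2012b}, and the runtime bounds fall out by bookkeeping on the sizes of $A$ and $B$.
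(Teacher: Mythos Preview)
The paper does not actually prove this lemma; it is stated as a consequence of results in~\cite{rosenbaum2012b}, and the paper explicitly remarks that it ``is much more difficult to prove than \lemref{gen-coll}'' and defers entirely to the cited work. So there is no in-paper proof to compare your proposal against. In that sense your proposal and the paper agree on the essential point: the lemma is imported wholesale from~\cite{rosenbaum2012b}.

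That said, your sketch of what the cited argument should look like is only partially on target. The $p$-group case is broadly right in spirit: one computes, for each composition series in $A$ and $B$, a canonical form of the group relative to that series (not merely a label of the chain's ``isomorphism type as a filtered group''), and then sorts and matches; this is precisely the analogue of how \lemref{gen-coll} is used, with the composition-series canonical form from~\cite{rosenbaum2012b} playing the role that the Cayley-table-from-a-generating-set plays there. Your description of the solvable case, however, is off. You propose to ``process a composition series level by level'' and obtain ``an elementary abelian section'' at each step, but that is not how the reduction works: the $O(\log n / \log\log n)$ overhead arises from a decomposition of the solvable group along its distinct prime divisors (of which there are at most $O(\log n / \log\log n)$), handling the Sylow pieces via the $p$-group machinery and gluing them together, not from walking a single composition series factor by factor. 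Your sketch does not explain how the $p$-group hypothesis of the lemma is actually invoked for a solvable group that is not itself a $p$-group, and the ``elementary abelian section'' picture would not by itself recover the stated exponent.
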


The main difference here from the analogous \lemref{gen-coll} for general groups is the requirement in condition (b) that the algorithm runs in $n^{c \log_p n + O(1)}$ time.  While similar results hold when the algorithm satisfies the much weaker condition that its runtime is submultiplicative, the assumption we have chosen allows us to obtain a slightly better runtime for $p$-groups.  We remark that \lemref{p-sol-coll} is much more difficult to prove than \lemref{gen-coll} (see~\cite{rosenbaum2012b}).  For the present paper, it suffices to focus on obtaining an algorithm that satisfies the conditions of \lemref{p-sol-coll}.  We accomplish this by applying bidirectional collision detection to the process of constructing a composition series that we used in~\cite{rosenbaum2012b} to obtain a square-root speedup over the generator-enumeration algorithm.  We start by reviewing this method. %  First, we recall the method used for constructing composition series in~\cite{rosenbaum2012b} that led to the square-root speedup over generator enumeration.

Consider a group $G$.  The socle of $G$ is a direct product of the minimal normal subgroups of $G$ which are in turn direct products of simple groups.  These simple groups are minimal normal subgroups of the socle so we see that the socle of $G$ can be expressed as a direct product of its simple minimal normal subgroups.  By adding a single simple minimal normal subgroup at each step, we obtain a composition series for the socle of the following form.

\begin{equation}
  \label{eq:mns-comp-series}
  1 \tril S_1 \tril \cdots \tril \bigtimes_{i = 1}^k S_i = \soc(G)
\end{equation}

where each $S_i$ is a simple minimal normal subgroup of the socle of $G$.  An important insight of~\cite{rosenbaum2012b} is that one only has to choose each intermediate product $\bigtimes_{i = 1}^j S_i$ of simple minimal normal subgroups of the socle rather than choosing the simple minimal normal subgroups themselves.  One can then show that the number of composition series of the form of \eqref{mns-comp-series} is at most $\abs{\soc(G)}^{(1 / 2) \log_p \abs{\soc(G)} + O(1)}$ where $p$ is the smallest prime dividing the order of $G$ and $s = \abs{\soc(G)}$.  To extend this bound to composition series for the full group $G$, we first compute a composition series for $\soc(G)$ and recursively compute a composition series for $G / \soc(G)$; we then obtain a composition series for $G$ by lifting the composition series for $G / \soc(G)$ back to subgroups of $G$ using the canonical map.  One then shows that the number of composition series for $G$ that can be constructed in this manner is at most $n^{(1 / 2) \log_p n + O(1)}$.

Let $G$ and $H$ be $p$-groups.  In order to obtain an algorithm that satisfies the conditions of \lemref{p-sol-coll} and is as efficient as possible, we reformulate the problem of constructing a composition series in terms of the bidirectional collision detection notation of \secref{collision}.  As before, we only need to consider the case where $G \cong H$.  We again think of $G$ and $H$ as groups which are isomorphic to an abstract group $K$ via the isomorphisms $\phi_a : K \ra G$ and $\phi_b : K \ra H$.  The tree $T$ now corresponds to the choices made in constructing a composition series for $K$ as described in the previous paragraph.  The root node corresponds to the series $1$; its children are the series $1 \tril S_1$ where $S_1$ is a simple minimal normal subgroup of $\soc(K)$.  Consider the node in $T$ that corresponds to the series $1 \tril S_1 \tril \cdots \tril \bigtimes_{i = 1}^j S_i$ where each $S_i$ is a simple minimal normal subgroup of $\soc(K)$; these are the series $1 \tril S_1 \tril \cdots \tril \bigtimes_{i = 1}^{j + 1} S_i$ where $S_{j + 1}$ is a simple minimal normal subgroup of $\soc(K)$.  We remark that there is only one node for different simple minimal normal subgroups of the socle that result in the same series; in other words, if $\bigtimes_{i = 1}^j S_i \times S_{j + 1} = \bigtimes_{i = 1}^j S_i \times S_{j + 1}'$ then the series $1 \tril S_1 \tril \cdots \tril \bigtimes_{i = 1}^j S_i \times S_{j + 1}$ and $1 \tril S_1 \tril \cdots \tril \bigtimes_{i = 1}^j S_i \times S_{j + 1}'$ are equal and therefore correspond to the same node in $T$.  The children of a node that corresponds to a composition series $1 \tril S_1 \tril \cdots \tril \bigtimes_{i = 1}^k S_i = \soc(K)$ for the socle of $K$ are defined recursively according to the process of constructing a composition series for $K / \soc(K)$.  Each node is denoted by the series obtained by recursively taking the preimages of the subgroups chosen thus far under the corresponding canonical maps.

As usual, we do not have direct access to $T$ but must rely instead on Alice and Bob's labeling functions.  Consider a series $K_0 = 1 \tril K_1 \tril \cdots \tril K_j$ that corresponds to some node in the tree $T$; we define $\lambda_\ell\left(K_0 = 1 \tril K_1 \tril \cdots \tril K_j\right) = \left(1 \tril \phi_\ell[K_1] \tril \cdots \tril \phi_\ell[K_j]\right)$ for $\ell \in \{a, b\}$.  The distance parameter $d$ is no longer equal to a natural quantity.  Instead, we define it to be the smallest natural number such that the subtree $T_1$ corresponding to the nodes within a distance of $d$ of the root contains at least $n^{(1 / 4) \log_p n}$ nodes.  Since there are at most $n$ simple minimal normal subgroups, each node in $T$ has at most $n$ children and $T_1$ therefore contains at most $f(n) = n^{(1 / 4) \log_p n + O(1)}$ nodes.  One can show that each subtree $T_2$ rooted at a node at distance $d$ from the root also contains at most $g(n) = n^{(1 / 4) \log_p n + O(1)}$ nodes.  Thus, we have an $n^{(1 / 4) \log_p n + O(1)}$ algorithm for computing the sets $A$ and $B$ for \lemref{p-sol-coll}.  Let us denote by $\varphi : G \ra G / \soc(G)$ the canonical map; the pseudocode is given in Algorithms~\ref{alg:comps-A} -- \ref{alg:comp-count}.  $\Call{ALL-CHOICES}{}$ is a routine that takes a call to a nondeterministic algorithm and returns the set of all outputs that can be obtained for some choice of the nondeterministic bits.  $\Call{\mns}{}$ is a subroutine that returns the set of all minimal normal subgroups and $\Call{\sims}{}$ is a subroutine that returns the set of all simple minimal normal subgroups.

%We now describe how the techniques we reviewed in the previous section can be combined with the new ideas in Sections~\ref{sec:collision} and~\ref{sec:gen-algorithm} to obtain a fourth-root speedup for $p$-group isomorphism.  We give a rough outline and then describe the algorithm precisely using detailed pseudocode.

%As usual, we let $G$ and $H$ be arbitrary groups and define $p$ to be the smallest prime which divides the order of $G$.  In the first phase of the algorithm, we guess simple minimal normal subgroups for a composition-series for $G$ until our upper bound the number of choices $k$ considered thus far is at least $n^{(1 / 4) \log_p n}$; we make the remaining at most $n^{(1 / 4) \log_p n + O(1)}$ choices for the composition for $G$ arbitrarily.  The strategy for $H$ is similar except that we first select the first $j$ simple minimal normal subgroups arbitrarily where $j$ is the number of simple minimal normal subgroups guessed in the composition series for $G$; then we consider all possible choices for the remaining simple minimal normal subgroups used to construct the composition series for $H$.  This procedure yields sets $A$ and $B$ of composition series for $G$ and $H$ with the property that if $G \cong H$ then there is an isomorphism which maps at least one composition series in $A$ to a composition series $B$.  We can check for such a collision efficiently by using \thmref{p-comp-can} and sorting.  This process for selecting the sets $A$ and $B$ is described concretely in Algorithms~\ref{alg:comps-A} -- \ref{alg:comp-count}.

The following lemma is essential for proving the correctness of our improved $n^{(1 / 4) \log_p n + O(1)}$ deterministic algorithms for $p$-groups and solvable groups.

\begin{lemma}
  \label{lem:comps-AB-cor}
  Let $G$ and $H$ be groups, let $\phi : G \ra H$ an isomorphism and let $t$ a natural number.  Set $A = \Call{\compsA}{G, t}$ and $B = \Call{\compsB}{H, t}$.  Then there exist composition series $S$ in $A$ and $S'$ in $B$ such that $\phi$ is an isomorphism from $S$ to $S'$ \ifft $G \cong H$.
\end{lemma}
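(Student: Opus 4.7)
The plan is to verify each direction of the biconditional separately. The reverse direction is essentially immediate: if some $S\in A$ and $S'\in B$ are such that $\phi$ is an isomorphism of composition series from $S$ to $S'$, then the restriction of $\phi$ to the top term of $S$ is an isomorphism onto the top term of $S'$; since those top terms are the full groups $G$ and $H$, we get $G\cong H$. The real content lies in the forward direction, which I would handle by specializing the generic strategy of \secref{collision} to the composition-series tree $T$ built from successive socles described immediately above the lemma.

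For the forward direction, assume $\phi:G\ra H$ is the given isomorphism. First I would verify that $\compsA$ plays Alice's role with respect to $T$: through the labeling $\lambda_a$, every depth-$\le t$ node of $T$ is realized as some length-$t$ prefix of a composition series of $G$ enumerated by $\compsA$, and each such prefix is then extended arbitrarily to a full composition series. Dually, I would verify that $\compsB$ plays Bob's role: a single depth-$t$ prefix $\sigma'$ of a composition series of $H$ is fixed, and every extension of $\sigma'$ to a full composition series of $H$ is placed in $B$.

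The two enumerations combine to produce the witness pair. Pull $\sigma'$ back through $\phi$ to obtain a depth-$t$ prefix $\phi^{-1}(\sigma')$ of a composition series of $G$; by Alice's completeness this prefix appears among those enumerated by $\compsA$, so $A$ contains some composition series $S$ whose first $t$ terms are $\phi^{-1}(\sigma')$. Applying $\phi$ termwise to $S$ yields a composition series of $H$ whose first $t$ terms coincide with $\sigma'$, and by Bob's completeness $\phi(S)\in B$. Setting $S'=\phi(S)$ then gives the required pair, and by construction $\phi$ is an isomorphism from $S$ to $S'$.

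The main obstacle will be nailing down the correspondence between the operational behavior of $\compsA$ and $\compsB$ and the abstract tree $T$, especially because $T$ is built recursively through $\soc(K)$, $K/\soc(K)$, $\ldots$ and different simple minimal normal subgroups can yield identical intermediate products. I will need to check (i) that the recursion visits all and only the distinct series, identifying tuples of the form $\bigtimes_i S_i\times S_{j+1}=\bigtimes_i S_i\times S_{j+1}'$ as a single tree node so that Alice's enumeration of depth-$\le t$ prefixes is neither incomplete nor redundant, and (ii) that Bob's enumeration of extensions from his fixed depth-$t$ node exhausts the subtree $T_2$ rooted there. Once these combinatorial checks go through, the lemma reduces to the general correctness argument for bidirectional collision detection established in \secref{collision}.
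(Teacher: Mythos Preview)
Your plan is correct and follows essentially the same route as the paper: establish that $\compsA$ realizes every depth-$t$ prefix (Alice's role) while $\compsB$ fixes one such prefix and realizes every extension (Bob's role), then match the two via $\phi$. The paper carries this out by explicit induction on the recursion depth of $\Call{\comps}{}$, handling the passage to $G/\soc(G)$ via the induced isomorphism $\indphi{G/\soc(G)}$ and verifying directly that the pruning on lines~\ref{line:set-T'}--\ref{line:set-T} preserves the set of attainable direct products --- precisely the checks you flag as (i) and (ii).
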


\begin{proof}
  The fact that Algorithms~\ref{alg:comps-A} and \ref{alg:comps-B} return sets of composition series already follows from the correctness proof of the algorithm for constructing a composition series from the author's previous paper~\cite{rosenbaum2012b}.  Consider an isomorphism $\phi : G \ra H$ and a natural number $t$ and set $A$ and $B$ accordingly.  Observe that since $A$ is a set of composition series for $G$ and $B$ is a set of composition series for $H$, if $G \not\cong H$ then there are not any isomorphic composition series $S$ and $S'$ in $A$ and $B$.  Thus, we suppose that $G \cong H$ and argue that there exist $S \in A$ and $S' \in B$ such that $\phi$ is an isomorphism from $S$ to $S'$.

  The proof is by induction on the number of recursive calls to $\Call{\comps}{}$.  The base case is when no recursive calls are made; this occurs when either $G$ is simple or $\soc(G) = G$.  If $G$ is simple then so is $H$ since $G \cong H$; we see that $A = \{1 \tril G\}$ and $B = \{1 \tril H\}$ and we are done.  Suppose that $\soc(G) = G$.  We start by considering the call to $\Call{\comps}{}$ in \algref{comps-B}; let $1 \tril S_1' \tril \cdots \tril \bigtimes_{i = 1}^{\bar t} S_i'$ be the prefix of each composition series $S'$ for $H$ which corresponds to the arbitrary choices made on \linref{arb-choose-L} of \algref{comp-count} (note that this is independent of the nondeterministic choices and is therefore well-defined).  Now consider the call to $\Call{\comps}{}$ in \algref{comps-A}.  Since $\phi : G \ra H$ is an isomorphism, one choice for the first simple minimal normal subgroup is $S_1 = \phi^{-1}[S_1']$.  Let us suppose that some combination of nondeterminstic choices on \linref{nondet-choose-L} leads to the prefix $1 \tril S_1 \tril \cdots \tril \bigtimes_{i = 1}^r S_i$ where $r < \bar t$ such that $\phi^{-1}[\bigtimes_{i = 1}^q S_i'] = \bigtimes_{i = 1}^q S_i$ for all $q \leq r$ after the first $r$ choices have been made.  Since $S_{r + 1}'$ is a simple minimal subgroup of $\soc(H)$, $\phi^{-1}[S_{r + 1}']$ is a simple minimal normal subgroup of $\soc(G)$.  Now initially (before any simple minimal normal subgroups were chosen), the set $T$ contained all simple minimal normal subgroups of $\soc(G)$; in particular, it contained $\phi^{-1}[S_{r + 1}']$.  Since lines~\ref{line:set-T'} -- \ref{line:set-T} only remove redundant simple minimal normal subgroups from $T$ and do not affect the set of direct products that can be formed, there exists $S_{r + 1}$ in $T$ such that $\bigtimes_{i = 1}^{r + 1} S_i = \phi^{-1}[\bigtimes_{i = 1}^r S_i'] \times \phi^{-1}[S_{i + 1}'] = \phi^{-1}[\bigtimes_{i = 1}^{r + 1} S_i']$.  It therefore follows by induction that some combination of nondeterministic choices in the call to $\Call{\comps}{}$ in \algref{comps-A} yields a prefix $1 \tril S_1 \tril \cdots \tril \bigtimes_{i = 1}^{\bar t} S_i$ of a composition series $S$ for $G$ such that $\phi^{-1}[\bigtimes_{i = 1}^q S_i'] = \bigtimes_{i = 1}^q S_i$ for all $q \leq \bar t$.  This takes care of all nondeterministic choices in the call to $\Call{\comps}{}$ in \algref{comps-A} so all the remaining choices will be made arbitrarily.  Let $S = (1 \tril S_1 \tril \cdots \tril \bigtimes_{i = 1}^m S_i = \soc(G))$ be the resulting composition series for $G$.

  We now return to the call to $\Call{\comps}{}$ in \algref{comps-B}; recall that we have selected the prefix  $1 \tril S_1' \tril \cdots \tril \bigtimes_{i = 1}^{\bar t} S_i'$ of each composition series $S'$ for $H$.  This prefix corresponds to all arbitrary choices on \linref{arb-choose-L} of \algref{comp-count}.  The remaining choices to be made are all nondeterministic.  As we mentioned before, lines~\ref{line:set-T'} -- \ref{line:set-T} only remove redundant simple minimal normal subgroups and do not affect the set of direct products that can be formed.  It follows that there exists $S_{\bar t + 1}'$ in $T$ such that $\phi[\bigtimes_{i = 1}^{\bar t + 1} S_i] = \bigtimes_{i = 1}^{\bar t + 1} S_i'$; applying this argument inductively shows that for some set of nondeterministic choices yields a composition series $S' = (1 \tril S_1' \tril \cdots \tril \bigtimes_{i = 1}^s S_i' = \soc(H))$ for $H$ such that $\phi$ is an isomorphism from $S$ to $S'$.  This proves the basis case.

  For the inductive case, the argument used in the basis case implies that for some combination of nondeterministic choices, we obtain composition series $1 \tril S_1 \tril \cdots \tril \bigtimes_{i = 1}^m S_i$ and $1 \tril S_1' \tril \cdots \tril \bigtimes_{i = 1}^m S_i'$ for $\soc(G)$ and $\soc(H)$ in both top-level calls to $\Call{\comps}{}$ in Algorithms~\ref{alg:comps-A} and \ref{alg:comps-B} that are isomorphic under $\phi$.  Let us consider the isomorphism $\indphi{G / \soc(G)} : G / \soc(G) \ra H / \soc(H)$ induced by $\phi$ between the factor groups.  The inductive hypothesis implies that for some nondeterministic choices in the first-level recursive calls to $\Call{\comps}{}$, we obtain composition series for $G / \soc(G)$ and $H / \soc(H)$ that are isomorphic under $\indphi{G / \soc(G)}$; the inverse images of these composition series under the canonical maps are also isomorphic under $\phi$.  It follows that we obtain a composition series $S$ for $G$ and a composition series $S'$ for $H$ that are isomorphic under $\phi$ for some combination of nondeterministic choices.
\end{proof}

\begin{algorithm}[H]
  \begin{algorithmic}[1]
    \Input A group $G$ specified by its multiplication table and a natural number $t$
    \Output A set $A$ of composition series for $G$ obtained by choosing the first $t$ simple minimal normal subgroups nondeterministically and choosing the rest arbitrarily
    \Function{$\compsA$}{$G, t$}
      \State \Return $\Call{ALL-CHOICES}{\Call{\comps}{G, 1, t, 0}}$
    \EndFunction
  \end{algorithmic}
  \caption{An algorithm for computing a set $A$ of composition series for $G$}
  \label{alg:comps-A}
\end{algorithm}

\begin{algorithm}[H]
  \begin{algorithmic}[1]
    \Input A group $H$ specified by its multiplication table and a natural number $t$
    \Output A set $B$ of composition series for $H$ obtained by choosing the first $t$ simple minimal normal subgroups arbitrarily and choosing the rest nondeterministically
    \Function{$\compsB$}{$H, t$}
      \State \Return $\Call{ALL-CHOICES}{\Call{\comps}{H, t + 1, \infty, 0}}$
    \EndFunction
  \end{algorithmic}
  \caption{An algorithm for computing a set $B$ of composition series for $H$}
  \label{alg:comps-B}
\end{algorithm}

In order to prove efficiency, it is convenient to make use of the following result from our previous work (which we reformulate to fit the current setting).

\begin{lemma}[\cite{rosenbaum2012b}]
  \label{lem:num-comp-choices}
  Let $G$ be a $p$-group.  Then the number $N$ of composition series returned by $\Call{\compsA}{G, \log_p n}$ or $\Call{\compsB}{H, 0}$ is at most $n^{(1 / 2) \log_p n + O(1)}$.
\end{lemma}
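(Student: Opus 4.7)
The plan is to bootstrap from a single-socle bound established in~\cite{rosenbaum2012b} and then multiply through the recursion on socles. That prior result says that for any group $K$ with $\abs{\soc(K)} = s$, the number of distinct chains of the form of \eqref{mns-comp-series} that the construction produces for $\soc(K)$ is at most $s^{(1 / 2) \log_p s + O(1)}$, the key insight being that one only enumerates the intermediate direct products $\bigtimes_{i = 1}^j S_i$ rather than independently choosing each simple summand. The pruning in \algref{comp-count} (the same steps used in the proof of~\lemref{comps-AB-cor} to eliminate redundant minimal normal summands) ensures each such direct product arises only once in the nondeterministic search, so this bound applies per socle level.

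Both $\Call{\compsA}{G, \log_p n}$ and $\Call{\compsB}{H, 0}$ leave every simple minimal normal subgroup as a nondeterministic choice: for $\compsA$, because a $p$-group of order $n = p^k$ has composition length exactly $k = \log_p n$, so the nondeterministic range $[1, \log_p n]$ already covers every choice; for $\compsB$, because the range $[1, \infty)$ does so immediately. Thus each call enumerates the full set of composition series that the recursive procedure produces. Let $s_1, s_2, \ldots, s_r$ be the orders of the socles encountered as the recursion descends through $G$, $G / \soc(G)$, $(G / \soc(G)) / \soc(G / \soc(G))$, and so on; write $a_i = \log_p s_i$, so that $\sum_i a_i = \log_p n$.

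Multiplying the per-level bounds yields a total of at most
\[
\prod_{i = 1}^r s_i^{(1 / 2) \log_p s_i + c}
= p^{(1 / 2) \sum_i a_i^2 + c \sum_i a_i}
\]
for some absolute constant $c$. Since all $a_i \geq 0$, we have $\sum_i a_i^2 \leq \bigl(\sum_i a_i\bigr)^2 = (\log_p n)^2$, and combining this with $\sum_i a_i = \log_p n$ bounds the right-hand side by $p^{(1 / 2)(\log_p n)^2 + c \log_p n} = n^{(1 / 2) \log_p n + c}$, exactly the claimed $n^{(1 / 2) \log_p n + O(1)}$.

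The main point to check is that the additive $O(1)$ in each per-socle bound does not accumulate destructively across the (up to $\log_p n$) levels of recursion. This works out because $\prod_i s_i^c = n^c$, so the constant terms collapse into an overall $n^{O(1)}$ factor rather than blowing up to $n^{O(\log_p n)}$; the rest of the argument is just the convexity inequality $\sum_i a_i^2 \leq \bigl(\sum_i a_i\bigr)^2$ together with the fact that the socle sizes $s_i$ multiply to $n$.
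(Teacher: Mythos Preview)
The paper does not actually prove this lemma here; it is imported from the author's prior work~\cite{rosenbaum2012b} and stated without proof. What the paper does provide, in the paragraph preceding the lemma, is exactly the outline you have filled in: the per-socle bound $s^{(1/2)\log_p s + O(1)}$ on the number of chains of the form~\eqref{mns-comp-series}, followed by a recursive combination over the tower $G, G/\soc(G), \ldots$, with the final step ``one then shows that the number of composition series for $G$ that can be constructed in this manner is at most $n^{(1/2)\log_p n + O(1)}$.'' Your argument supplies precisely that missing step via $\sum_i a_i^2 \le (\sum_i a_i)^2$ and $\prod_i s_i^c = n^c$, and it is correct. This also matches the explicit per-level count $N_i = \prod_{j=0}^{m_i-1}(s_i/p^j)$ that appears in the proof of \lemref{comp-t}, which for a $p$-group gives $N_i = p^{a_i(a_i+1)/2}$ and hence $N = \prod_i N_i \le n^{(1/2)\log_p n + 1/2}$ by your inequality.

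One small inaccuracy that does not affect the bound: in both $\Call{\compsA}{G,\log_p n}$ and $\Call{\compsB}{H,0}$ the counter $j$ is tested before it is incremented and starts at $0$, while the nondeterministic interval begins at $a=1$, so the very first simple minimal normal subgroup is in fact chosen arbitrarily rather than nondeterministically. Since you are proving an upper bound, fixing one choice only decreases the count, so your conclusion stands.
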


\begin{algorithm}[H]
  \begin{algorithmic}[1]
    \Input A group $G$ specified by its multiplication table, an interval $[a, b]$ and the number $j$ of minimal normal subgroups chosen thus far
    \Output A composition series $S$ for $G$ constructed by choosing simple minimal normal subgroups.  The \nth{i} simple minimal normal subgroup is chosen nondeterministically if $a \leq i \leq b$; otherwise, it is chosen arbitrarily
    \Function{$\comps$}{$G, a, b, j$}
      \State $S \coleq (1, G)$ \inlong{\Comment{The composition series for $G$ will be stored here}}
        \State $\{N_1, \ldots, N_k\} \coleq \Call{\mns}{G}$
        \State $\soc(G) \coleq \genb{N_i}{1 \leq i \leq k}$
        \If{$G$ is not simple}
          \State $T \coleq \bigcup_{i = 1}^k \Call{\sims}{N_i}$ \label{line:comp-T}
          \State $K \coleq 1$
          \While{$K \not= \soc(G)$} \label{line:prod-loop}
            \If{$a \leq j \leq b$}
              \State Nondeterministically choose $L \in T$ \label{line:nondet-choose-L}
            \Else
              \State Arbitrarily choose $L \in T$ \label{line:arb-choose-L}
            \EndIf
            \State $K \coleq K \times L$
            \State Insert $K$ into $S$
            \State $j \coleq j + 1$
            \State $T' = \emptyset$ \label{line:set-T'}
            \For{$L \in T$} \label{line:loop-reps}
              \If{$K \cap L = 1 \text{ and } K \times L \not= K \times L'$ for all $L' \in T'$}
                \State Insert $L$ into $T'$
              \EndIf
            \EndFor \label{line:loop-reps-end}
            \State $T \coleq T'$ \label{line:set-T}
          \EndWhile
          \State Label $\soc(G)$ as ``socle'' in $S$
          \If{$\soc(G) \not= G$}
            \State $(K_0 \tril \cdots \tril K_m) \coleq \Call{\comps}{G / \soc(G), a, b, j}$ \label{line:recur}
            \For{$i = 1, \ldots, m - 1$}
              \State Insert $G_i \coleq \varphi^{-1}[K_i]$ into $S$ and copy the label of $K_i$ to $G_i$
            \EndFor
          \EndIf
        \EndIf
        \State \Return $S$
    \EndFunction
  \end{algorithmic}
  \caption{A subroutine for computing a composition series}
  \label{alg:comp-count}
\end{algorithm}

The following will be used to derive the runtime of our algorithm.

\begin{lemma}
  \label{lem:comp-t}
  Let $G$ be a $p$-group.  Then we can compute a natural number $t(G)$ in polynomial time which has the following properties:

  \begin{enumerate}
  \item $\Call{\compsA}{G, t(G)}$ and $\Call{\compsB}{G, t(G)}$ take $n^{(1 / 4) \log_p n + O(1)}$ time.
  \item If $H$ is a group and $G \cong H$, then $t(G) = t(H)$.
  \end{enumerate}
\end{lemma}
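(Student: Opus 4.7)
The plan is to choose $t(G)$ so that both $\Call{\compsA}{G, t(G)}$ and $\Call{\compsB}{G, t(G)}$ produce sets of size at most $n^{(1/4)\log_p n + O(1)}$. Let $N(G, t)$ denote the number of distinct depth-$t$ prefixes of composition series produced by \algref{comp-count}, i.e., the number of nodes at depth $t$ in the underlying choice tree. Define $t(G)$ as the smallest $t \in \{0, 1, \ldots, \log_p n\}$ with $N(G, t) \geq n^{(1/4)\log_p n}$, and $\log_p n$ if no such $t$ exists. Since the choice tree is determined entirely by the abstract isomorphism type of $G$, so is $N(G, \cdot)$, and hence $t(G) = t(H)$ whenever $G \cong H$, which is property~(b).

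The key structural claim is that for $p$-groups the tree is \emph{uniform}: every node at depth $t$ has the same number of children, so every subtree rooted at depth $t$ has the same number of leaves. Within one iteration of the while-loop of \algref{comp-count} at recursion level $i$, choosing the next simple minimal normal subgroup amounts to extending a partial flag of length $j$ in the elementary abelian socle factor $G_i/G_{i-1} \cong (\bbZ/p)^{k_i}$ by one dimension, where $G_0 \tril G_1 \tril \cdots \tril G_r = G$ is the ascending socle series of $G$ and $k_i = \log_p |G_i/G_{i-1}|$. The redundancy-pruning in lines~\ref{line:set-T'}--\ref{line:set-T} ensures that each choice produces a distinct subgroup, so the branching factor equals the number of one-dimensional subspaces of the codimension-$j$ quotient $(\bbZ/p)^{k_i - j}$, namely $(p^{k_i - j} - 1)/(p - 1)$; this depends only on $k_i$ and $j$, not on the particular flag chosen. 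Across a socle boundary the recursive call on \linref{recur} is always on the canonical quotient $G/G_i$, preserving uniformity.

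With uniformity in hand, $|A| = N(G, t(G))$ and by the minimality of $t(G)$ combined with the bound of $\leq n$ children per node we get $N(G, t(G)) \leq n \cdot N(G, t(G) - 1) \leq n^{(1/4)\log_p n + 1}$, establishing property~(a) for $A$. Uniformity also implies that every subtree rooted at depth $t(G)$ contains exactly (total number of composition series) $/\, N(G, t(G))$ leaves, so by \lemref{num-comp-choices} we get $|B| \leq n^{(1/2)\log_p n + O(1)} / n^{(1/4)\log_p n} = n^{(1/4)\log_p n + O(1)}$. The polynomial per-series overhead of \algref{comp-count} is absorbed into the $O(1)$ in the exponent. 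Finally, $t(G)$ is computable in polynomial time by first computing the socle-series rank sequence $(k_1, \ldots, k_r)$ via iterated socle computation, then evaluating the closed-form product expression for $N(G, t)$ to find the smallest $t \leq \log_p n$ with $N(G, t) \geq n^{(1/4)\log_p n}$.

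The principal obstacle is the uniformity claim, specifically showing that the branching factor does not depend on the particular flag chosen within a socle layer and that uniformity survives the cross-layer transitions; once this is justified, the balance calculation for $|A|$ and $|B|$ and the polynomial-time computation of $t(G)$ are routine.
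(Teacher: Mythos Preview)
Your approach is correct and close in spirit to the paper's, but the mechanics differ in a way worth noting. The paper never argues that the choice tree is uniform; instead it works entirely with \emph{upper bounds} on the branching. Writing $s_i=|\soc(F_{i-1})|$ and $m_i$ for the composition length of $\soc(F_{i-1})$, the paper sets $N_i=\prod_{j=0}^{m_i-1}(s_i/p^j)$ and $N=\prod_i N_i$, then defines $t(G)$ as the cut point where the partial product first exceeds $\sqrt{N}$. Because each factor $s_i/p^j$ depends only on the invariants $(s_i,p,j)$ and not on any choices made so far, the two halves of the product automatically bound the sizes of $A$ and $B$ regardless of which prefix Bob fixed or which suffix Alice fixed; uniformity is never needed. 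Isomorphism invariance of $t(G)$ then follows because it is a function of $p$, $\ell$, and the $m_i$'s alone.

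Your version replaces these upper bounds with exact node counts $N(G,t)$, which forces you to prove the uniformity claim so that $|B|=(\text{total leaves})/N(G,t(G))$ is actually valid. Your sketch of uniformity is correct for $p$-groups (branching at step $j$ of layer $i$ is the number of lines in $(\bbZ/p)^{k_i-j}$, and the recursive call is always on the fixed quotient $G/G_i$), so the argument goes through. The trade-off: your route gives slightly sharper intermediate quantities but requires the extra structural lemma; the paper's route sidesteps uniformity entirely at the cost of working with bounds rather than exact counts. Either way, the polynomial-time computability of $t(G)$ and the final $n^{(1/4)\log_p n+O(1)}$ bound come out the same.
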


\begin{proof}
  We select $t(G)$ so that $\Call{\compsA}{G, t(G)}$ makes close to $n^{(1 / 4) \log_p n}$ nondeterministic choices.  Let $\ell$ be the number of subgroups which are labelled ``socle'' in the composition series obtained when we run $\Call{\comps-A}{G, 0}$; we define $F_0 = G$ and $F_{i + 1} = F_i / \soc(F_i)$ and observe that the last subgroup labelled ``socle'' is $\soc(F_{\ell - 1})$.  Let $m_i$ be the number of simple minimal normal subgroups required to express $\soc(F_{i - 1})$ as a direct product.  (To see that this is well-defined, separate the socle into its Abelian and non-Abelian simple minimal normal subgroups and recall that there is at most one way to express a group as a direct product of non-Abelian simple groups.)  Let us define $N_i = \prod_{j = 0}^{m_i - 1} (s_i / p^j)$ where $s_i = \abs{\soc(F_{i - 1})}$; this is an upper bound on the number of possible ways to construct the composition series for the \nth{i} socle.  We let $N = \prod_{i = 1}^\ell N_i$; this is an upper bound on the total number of choices in constructing a composition series for the full group $G$.  We choose the largest $r$ such that $\prod_{i = 1}^{r - 1} N_i \leq \sqrt{N}$; we then select the largest $u$ such that $\left(\prod_{i = 1}^{r - 1} N_i\right) \left(\prod_{j = 0}^{u - 1} (s_r / p^j)\right) \leq \sqrt{N}$.  Since $\sqrt{N} < \left(\prod_{i = 1}^{r - 1} N_i\right) \left(\prod_{j = 0}^{u} (s_r / p^j)\right) \leq \left(\prod_{i = 1}^{r - 1} N_i\right) \left(\prod_{j = 0}^{u - 1} (s_r / p^j)\right) n$, we see that $\left(\prod_{i = 1}^{r - 1} N_i\right) \left(\prod_{j = 0}^{u - 1} (s_r / p^j)\right)$ is within a factor of $n$ of $\sqrt{N}$.  Letting $t(G) = \sum_{i = 1}^{r - 1} m_i + u$, the call $\Call{\compsA}{G, t(G)}$ therefore results in $\sqrt{N}$ nondeterministic choices up to a factor of $n$.  Since $N = n^{(1 / 2) \log_p n + O(1)}$ (see~\cite{rosenbaum2012b} and \lemref{num-comp-choices}), this call requires $\sqrt{N} \poly(n) = n^{(1 / 4) \log_p n + O(1)}$ time.  We note that computing $t(G)$ required only polynomial time.

  For the call $\Call{\compsB}{G, t(G)}$, we note that $\left(\prod_{i = 1}^{r - 1} N_i\right) \left(\prod_{j = 0}^{u - 1} (s_r / p^j)\right)$ is within a factor of $n$ of $\sqrt{N}$, so $\left(\prod_{j = u}^{m_r - 1} (s_r / p^j)\right) \left(\prod_{i = r + 1}^\ell N_i\right)$ is also within a factor of $n$ of $\sqrt{N}$.  Since $\left(\prod_{j = u}^{m_r - 1} (s_r / p^j)\right) \left(\prod_{i = r + 1}^\ell N_i\right)$ is an upper bound on the number of nondeterministic choices made in the call $\Call{\compsB}{G, t(G)}$, we see that this call also takes at most $\sqrt{N} \poly(n) = n^{(1 / 4) \log_p n + O(1)}$ time.

  Finally, let $H$ be a group isomorphic to $G$.  We note that $t(G)$ depends only on the smallest prime $p$ that divides the order of $G$, the number $\ell$ of subgroups labelled ``socle'' and the composition length $m_i$ of each $F_i$.  Since all of these are isomorphism invariant, we see that $t(G) = t(H)$.
\end{proof}

The fourth-root speedups now follow easily.

\begin{theorem}
  \label{thm:p-group-iso}
  $p$-group isomorphism is in $n^{(1 / 4) \log n + O(1)}$ deterministic time.
\end{theorem}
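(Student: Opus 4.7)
The plan is to check that the pseudocode in Algorithms~\ref{alg:comps-A}--\ref{alg:comp-count}, together with the choice of parameter $t$ from \lemref{comp-t}, satisfies the hypotheses of \lemref{p-sol-coll} with $c = 1/4$, and then invoke that lemma. Concretely, on input $p$-groups $G$ and $H$, I would first compute $t(G)$ and $t(H)$ in polynomial time using \lemref{comp-t}. If $t(G) \neq t(H)$, then $G \not\cong H$ is forced by property~(b) of \lemref{comp-t}, so the algorithm may immediately output ``$G \not\cong H$''; otherwise set $t \coleq t(G) = t(H)$ and output $A \coleq \Call{\compsA}{G,t}$ and $B \coleq \Call{\compsB}{H,t}$.

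Next I would verify the three conditions of \lemref{p-sol-coll}. Condition~(a) follows because, by construction, $\Call{\compsA}{}$ and $\Call{\compsB}{}$ enumerate outputs of the nondeterministic composition-series routine $\Call{\comps}{}$, and the correctness of that routine (already established in the author's prior work~\cite{rosenbaum2012b}) guarantees that every element of $A$ is a composition series for $G$ and every element of $B$ is a composition series for $H$. Condition~(b), the runtime bound $n^{(1/4)\log_p n + O(1)}$, is exactly the content of part~(a) of \lemref{comp-t}; the preprocessing to compute $t(G),t(H)$ is negligible. Condition~(c) is precisely the statement of \lemref{comps-AB-cor}: for any isomorphism $\phi : G \to H$ and any common parameter $t$, there exist $S \in A$ and $S' \in B$ on which $\phi$ restricts to an isomorphism of composition series.

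With (a)--(c) in hand, \lemref{p-sol-coll} applied with $c = 1/4$ upgrades this to a deterministic isomorphism test for $p$-groups that runs in $n^{(1/4)\log n + O(1)}$ time, which is exactly the statement of the theorem.

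There is no real obstacle remaining: every nontrivial ingredient has been proved earlier in the section. The only subtle point to watch is that the algorithm we hand to \lemref{p-sol-coll} must use the \emph{same} nondeterministic-choice budget $t$ on the $G$-side and the $H$-side, since \lemref{comps-AB-cor} produces isomorphic composition series only for a common $t$; this is why I call out part~(b) of \lemref{comp-t} (isomorphism-invariance of $t(G)$) to justify both the early rejection when $t(G) \neq t(H)$ and the legitimacy of using a single $t$ in the calls to $\Call{\compsA}{}$ and $\Call{\compsB}{}$.
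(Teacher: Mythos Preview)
Your proposal is correct and follows essentially the same route as the paper: compute $t(G)$ and $t(H)$ via \lemref{comp-t}, reject if they differ, otherwise set $t=t(G)=t(H)$, form $A=\Call{\compsA}{G,t}$ and $B=\Call{\compsB}{H,t}$, and verify the three hypotheses of \lemref{p-sol-coll} using \lemref{comps-AB-cor} for~(c) and \lemref{comp-t} for~(b). Your explicit remark about why a common $t$ is needed is a helpful clarification but does not deviate from the paper's argument.
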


\begin{proof}
  We start by computing $t(G)$ and $t(H)$ using \lemref{comp-t}.  If $t(G) \not= t(H)$, then $G \not\cong H$; otherwise, we set $t = t(G) = t(H)$ and compute $A = \Call{\compsA}{G, t}$ and $B = \Call{\compsB}{H, t}$.  By \lemref{comps-AB-cor}, $\phi : G \ra H$ is an isomorphism \ifft there exist composition series $S$ in $A$ and $S'$ in $B$ such that $\phi$ is an isomorphism from $S$ to $S'$.  This takes care of conditions (a) and (c) of \lemref{p-sol-coll}.  For condition (b), we note that by \lemref{comp-t}, $n^{(1 / 4) \log_p n + O(1)}$ is an upper bound on the runtime of the procedure described so far.  The result is then immediate from \lemref{p-sol-coll}.
\end{proof}

The same argument yields a fourth-root speedup for solvable groups.

\solgroupiso*

%\section{Bidirectional collision detection for tournament isomorphism}
%\label{sec:tour-iso}
%\input{tour-iso}

\section{A $T^{1 / \sqrt{2}}$ speedup for graph isomorphism}
\label{sec:graph-iso}
We start by recalling the high-level structure of the best known algorithm for graph isomorphism~\cite{babai1983b}.  The first step involves reducing the original graph isomorphism problem to $n^{4 n / d}$ instances of graph isomorphism problems where the graphs have color-degree\footnote{The color-degree of a node is a technical concept that is required for the statements of the theorems reviewed here to be correct.  However, for the purposes of this section, nothing is lost if one simply thinks of the color-degree of a node as the ordinary notion of degree.} at most $d$.

\begin{lemma}[Zemlyachenko, cf.~\cite{babai1981a}]
  \label{lem:zemlyachenko}
  Let $X$ and $Y$ be graphs.  There is a polynomial-time deterministic procedure $P$ that takes a graph and a sequence of vertexes as its input and outputs a colored graph with the following properties:

  \begin{enumerate}
  \item If $X \cong Y$, then for any sequence of nodes $x_1, \ldots, x_m$ in $X$, there exists a sequence of nodes $y_1, \ldots, y_m$ in $Y$ such that $P(X; x_1, \ldots, x_m) \cong P(Y; y_1, \ldots, y_m)$.
  \item If $X \not\cong Y$, then for all sequences of nodes $x_1, \ldots, x_m$ and $y_1, \ldots, y_m$ in $X$ and $Y$, $P(X; x_1, \ldots, x_m) \not\cong P(Y; y_1, \ldots, y_m)$.
  \end{enumerate}

  Moreover, we can deterministically compute in polynomial time a sequence of $4 n / d$ nodes $x_1, \ldots, x_m$ in $X$ such that $P(X; x_1, \ldots, x_m)$ has color-degree at most $d$.
\end{lemma}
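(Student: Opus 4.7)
The plan is to realize $P$ via the standard individualization-and-refinement construction. Given $(X; x_1, \ldots, x_m)$, I would assign vertex $x_i$ a unique color $i$ (for $i = 1, \ldots, m$) and all remaining vertices a common default color, then run naive color refinement (the $1$-dimensional Weisfeiler--Leman procedure): iteratively replace each vertex's color by the pair consisting of its current color and the multiset of colors of its neighbors, until the partition stabilizes. This halts in at most $n$ rounds and is deterministic polynomial time. The resulting colored graph (retaining the edge set of $X$) is $P(X; x_1, \ldots, x_m)$.

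Properties (a) and (b) of the lemma follow from the canonicity of refinement. For (a), given an isomorphism $\phi : X \to Y$, set $y_i \coleq \phi(x_i)$: then $\phi$ is an isomorphism of the individualized initial colorings, and since color refinement commutes with isomorphism, $\phi$ remains an isomorphism from $P(X; x_1, \ldots, x_m)$ to $P(Y; y_1, \ldots, y_m)$. For (b), any isomorphism $\psi$ between the two refined colored graphs must respect the unique colors $1, \ldots, m$ (each such color labels exactly one vertex on each side), forcing $\psi(x_i) = y_i$; since $\psi$ also preserves edges, it is in particular an isomorphism $X \to Y$, contradicting $X \not\cong Y$.

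For the \emph{moreover} statement, I would produce the sequence greedily. Maintain a current sequence (initially empty), compute the refined coloring of $X$ after individualizing the current sequence, and while some vertex $v$ satisfies $|N(v) \cap C| > d$ for some color class $C$, append $v$ to the sequence. The new individualization at minimum splits $C$ into the nonempty pieces $C \cap N(v)$ and $C \setminus N(v)$, and typically cascades into further splits during the ensuing refinement. To bound the number of steps, I would introduce the potential $\Phi \coleq \sum_{|C| > d} |C|$, i.e., the total mass in ``large'' color classes, and argue that each individualization drops $\Phi$ by $\Omega(d)$; since $\Phi \leq n$ initially, this yields the claimed $4n/d$ bound.

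The main obstacle is this final quantitative claim. It requires exploiting the equitable-partition structure of a stable coloring: in a stable coloring, every vertex of a class $C$ has the same number of neighbors in any other class $C'$, so the existence of a vertex with $>d$ neighbors in $C'$ identifies a \emph{non}-equitable pair $(C, C')$; individualizing a well-chosen witness in $C$ or $C'$ then forces refinement to split the offending class, and the imbalance of the split can be lower-bounded by $\Omega(d)$ using the fact that the color-degree of some vertex exceeds $d$. The cleanest instantiation of this charging argument is Zemlyachenko's original; see~\cite{babai1981a} for the detailed analysis, which I would invoke to close the proof.
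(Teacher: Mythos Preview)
The paper does not prove this lemma: it is quoted as a known result, attributed to Zemlyachenko with a pointer to~\cite{babai1981a}, and used as a black box. So there is no ``paper's own proof'' to compare your proposal against.

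That said, your sketch is the standard construction, and parts~(a) and~(b) are correctly argued. For the \emph{moreover} clause there are two small gaps in your instantiation, even at the sketch level. First, your greedy trigger ``$|N(v)\cap C|>d$'' should be the \emph{effective} color-degree $\min(|N(v)\cap C|,\,|C\setminus N(v)|)>d$: with your criterion $C\setminus N(v)$ may be empty, so individualizing $v$ need not split $C$ at all, and your ``at minimum splits $C$ into nonempty pieces'' claim can fail. Second, the potential $\Phi=\sum_{|C|>d}|C|$ does not obviously drop by $\Omega(d)$ per step: if $C$ splits into two parts both of size $>d$, $\Phi$ is unchanged by that split, and if the smaller part has size $\le d$, the drop equals that smaller size, which can be $1$. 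The classical argument instead individualizes a vertex whose \emph{total} effective color-degree exceeds $d$ and tracks a different potential (so that the aggregate size of the smaller pieces created in one round is $>d$). Since you already flag this step as the main obstacle and defer to~\cite{babai1981a}, your write-up is fine as a proof sketch; just be aware that the specific greedy rule and potential you wrote down would need to be replaced by Zemlyachenko's in a full proof.
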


To obtain an algorithm for graph isomorphism, we compute a sequence of $4 n / d$ nodes $x_1, \ldots, x_m$ in $X$ such that $P(X; x_1, \ldots, x_m)$ has color-degree at most $d$; we then consider all $n^{4 n / d}$ possible sequences $y_1, \ldots, y_m$  of $4 n / d$ nodes in $Y$ and check if $P(X; x_1, \ldots, x_m) \cong P(Y; y_1, \ldots, y_m)$ for one of these sequences.  This occurs \ifft $X \cong Y$.  By combining with an $n^{c d / \log d}$ algorithm~\cite{babai1983b} for testing isomorphism of graphs of color-degree at most $d$, we obtain an $n^{4 n / d + c d / \log d + O(1)}$ algorithm for graph isomorphism where $d$ is a parameter that we choose.  Minimizing the runtime over $d$ yields the best known algorithm for graph isomorphism.

\begin{theorem}[Babai, Kantor and Luks~\cite{babai1983b}]
  \label{thm:graph-iso}
  Graph isomorphism can be decided in $\exp(O(\sqrt{n \log n}))$ deterministic time.
\end{theorem}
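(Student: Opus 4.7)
The plan is to combine Zemlyachenko's color-refinement reduction (\lemref{zemlyachenko}) with the $n^{cd/\log d}$ algorithm of~\cite{babai1983b} for testing isomorphism of graphs of bounded color-degree, and then to choose the free parameter $d$ so as to minimize the resulting runtime.

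First, I would apply \lemref{zemlyachenko} to the input graph $X$ to obtain, in polynomial time, a sequence $x_1, \ldots, x_m$ of $m = 4n/d$ vertexes such that the colored graph $P(X; x_1, \ldots, x_m)$ has color-degree at most $d$. The two guarantees of \lemref{zemlyachenko} imply that $X \cong Y$ \ifft there exists a sequence $y_1, \ldots, y_m$ of $m$ vertexes in $Y$ for which $P(X; x_1, \ldots, x_m) \cong P(Y; y_1, \ldots, y_m)$. The algorithm enumerates all at most $n^{4n/d}$ candidate sequences $y_1, \ldots, y_m$ in $Y$ and, for each, tests isomorphism of the two resulting colored graphs using the bounded-color-degree subroutine of~\cite{babai1983b} in time $n^{cd/\log d}$. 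The total deterministic runtime is therefore $n^{4n/d + cd/\log d + O(1)}$.

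Finally, I would balance the two competing contributions to the exponent. Setting $4n/d \approx cd/\log d$ gives $d^2 \approx (4n/c)\log d$, which yields $d = \Theta(\sqrt{n \log n})$. Substituting this choice back, both $4n/d$ and $cd/\log d$ become $\Theta(\sqrt{n/\log n})$, so the runtime is bounded by $n^{O(\sqrt{n/\log n})} = \exp(O(\sqrt{n \log n}))$, as required.

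Since both \lemref{zemlyachenko} and the bounded-color-degree isomorphism algorithm are imported as black boxes, the only real step in this plan is the parameter optimization, which is routine. The conceptual hard part --- were one to reprove the whole theorem from scratch --- would be the bounded-color-degree subroutine itself, based on Luks's group-theoretic framework, but that is outside the scope of the present argument and is invoked as a cited result.
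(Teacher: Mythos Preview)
Your proposal is correct and matches the paper's argument essentially verbatim: apply \lemref{zemlyachenko} to fix a sequence for $X$, enumerate all $n^{4n/d}$ sequences for $Y$, test each pair with the $n^{cd/\log d}$ bounded-color-degree subroutine, and optimize $d = \Theta(\sqrt{n\log n})$. The paper presents exactly this in the paragraph preceding the theorem statement, so there is nothing to add.
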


Optimizing the constant in the exponent of the theorem, we obtain a runtime of $2^{(4 \sqrt{2 c}) \sqrt{n \log n} + O(\log n)}$.  Our contribution in this section is to note that bidirectional collision detection can be applied to choose $n^{2 n / d}$ sequences of $4 n / d$ nodes $x_1, \ldots, x_m$ and $y_1, \ldots, y_m$ in both $X$ and $Y$ such that for some pair of sequences, $P(X; x_1, \ldots, x_m) \cong P(Y; y_1, \ldots, y_m)$ \ifft $X \cong Y$.  Some care is required as not all sequences of $4 n / d$ nodes result in a graph of color-degree at most $d$.  However, the polynomial-time algorithm of \lemref{zemlyachenko} (for constructing a sequence of nodes that result in a graph of color-degree at most $d$) can be modified to make certain nondeterministic choices such that every combination of nondeterministic choices results in a graph of color-degree at most $d$.  Moreover, if two graphs $X$ and $Y$ are isomorphic, then every colored graph resulting from some combination of nondeterministic choices for $X$ is isomorphic to a colored graph resulting from some set of nondeterministic choices for $Y$ (and the reverse holds as well).  We use bidirectional collision detection to choose sets of $n^{2 n / d}$ sequences $x_1, \ldots, x_m$ and $y_1, \ldots, y_m$ of $4 n / d$ nodes in both $X$ and $Y$; then we compute the canonical form of each of the graphs $P(X; x_1, \ldots, x_m)$ and $P(Y; y_1, \ldots, y_m)$ in $n^{c d / \log d}$ time~\cite{babai1983a,babai1983b} and sort the results.  This enables us to check if a pair of sequences that result in isomorphic graphs exist in $n^{2 n / d + c d / \log d + O(1)}$ time.  Since it is easy to see that the exponent is minimized up to constant factors when $d = c' \sqrt{n \log n}$, we obtain a runtime of $2^{2 (1 / c' + c c') \sqrt{n \log n} + O(\log n)}$.  Since $c' > 0$ is a constant that we choose, we can further optimize the exponent by minimizing $2 (1 / c' + c c')$ with respect to $c'$.  This yields a final runtime of $2^{4 \sqrt{c} \sqrt{n \log n} + O(\log n)}$.  We summarize this result in the following theorem.

\begin{theorem}
  \label{thm:d-bd-graph-iso}
  Graph isomorphism can be decided in $2^{4 \sqrt{c} \sqrt{n \log n} + O(\log n)}$ deterministic time where $c$ is the constant in the exponent of the $n^{c d / \log d}$ time algorithm~\cite{babai1983a,babai1983b} for computing canonical forms of graphs of color-degree at most $d$.
\end{theorem}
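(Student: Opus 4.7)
The plan is to recast the graph isomorphism algorithm of~\cite{babai1983b} in the bidirectional collision detection framework of \secref{collision}. First I would modify the polynomial-time procedure of \lemref{zemlyachenko} so that the choice of each successive vertex $x_i$ becomes nondeterministic, while maintaining two key invariants: (i) every sequence of nondeterministic choices produces a graph $P(X; x_1, \ldots, x_m)$ of color-degree at most $d$, and (ii) when $X \cong Y$, the collection of colored graphs that can be produced from $X$ (over all choices) agrees, up to isomorphism, with the collection produced from $Y$. Once this is established, the sets of canonical forms produced by exploring the two nondeterministic trees intersect if and only if $X \cong Y$.

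Next I would set up the bidirectional collision tree $T$ of depth $m = 4n/d$, where each internal node corresponds to a partial sequence of chosen vertices and its children correspond to the nondeterministic extensions available at that step. Alice's and Bob's label functions send a node of $T$ to the corresponding sequence in $X$ and $Y$ respectively. Choosing the distance parameter as $m/2 = 2n/d$ ensures, by the branching bound, that the subtree $T_1$ near the root has at most $f(n) = n^{2n/d}$ nodes and each deep subtree $T_2$ has at most $g(n) = n^{2n/d}$ nodes. Alice would then enumerate all $n^{2n/d}$ sequences reaching depth $m/2$ and extend each arbitrarily to depth $m$; Bob would fix one such prefix and enumerate all $n^{2n/d}$ completions. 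For each resulting sequence on each side I would run the $n^{cd/\log d}$ canonical form algorithm of~\cite{babai1983a,babai1983b} on the corresponding graph $P(\cdot; \cdot)$, then sort both lists and test for a common canonical form.

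By the bidirectional argument of \secref{collision} combined with the invariants above, a common canonical form appears if and only if $X \cong Y$. The overall running time is $n^{2n/d + cd/\log d + O(1)} = 2^{((2n/d) \log n + (cd/\log d) \log n + O(\log n))}$. Setting $d = c' \sqrt{n \log n}$ yields $\log d = (1/2) \log n + O(\log \log n)$, so the exponent of $2$ becomes $2(1/c' + c c') \sqrt{n \log n} + O(\log n)$; minimizing over $c' > 0$ gives $c' = 1/\sqrt{c}$ and the claimed bound $2^{4 \sqrt{c} \sqrt{n \log n} + O(\log n)}$.

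The main obstacle is constructing the nondeterministic version of Zemlyachenko's procedure with the required promise: the deterministic procedure makes canonical choices at each step, and replacing them with nondeterministic ones requires verifying that \emph{every} branch still yields a color-degree-$d$ graph and that the resulting multiset of colored graphs is an isomorphism invariant of the underlying graph. This amounts to arguing that the only structural freedom exploited by the procedure is the choice of an individualizing vertex, and that isomorphic graphs offer the same options up to relabeling, which is the essential content of conditions (a) and (b) of \lemref{zemlyachenko}. Once this invariance is in hand, the rest is a direct plug-in to the bidirectional collision framework.
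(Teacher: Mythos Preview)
Your proposal is correct and follows essentially the same approach as the paper: modify Zemlyachenko's individualization procedure to be nondeterministic while preserving the color-degree bound and isomorphism-invariance of the output set, apply bidirectional collision detection at depth $2n/d$ of the resulting tree, compute canonical forms with the $n^{cd/\log d}$ algorithm, and optimize over $d = c'\sqrt{n\log n}$. Your identification of the nondeterministic Zemlyachenko step as the point requiring care matches exactly what the paper flags, and your runtime calculation and optimization agree with the paper's.
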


This is a $T^{1 / \sqrt{2}}$ speedup over the previous best runtime of $2^{(4 \sqrt{2 c}) \sqrt{n \log n} + O(\log n)}$.

% The same techniques yield an $2^{4 \sqrt{2 c / 3} \sqrt{n \log n} + O(\log n)}$ time quantum algorithm if one substitutes quantum claw detection~\cite{bassard1997b} for bidirectional collision detection.  As with \thmref{graph-iso}, \thmref{d-bd-graph-iso} also applies to digraphs with colored nodes and edges.

% \begin{theorem}
%   \label{thm:q-bd-graph-iso}
%   Graph isomorphism can be decided in $2^{4 \sqrt{2 c / 3} \sqrt{n \log n} + O(\log n)}$ quantum time where $c$ is the constant in the exponent of the $n^{c d / \log d}$ time algorithm~\cite{babai1983a,babai1983b} for computing canonical forms of graphs of color-degree at most $d$.
% \end{theorem}

\section{Time-space tradeoffs}
\label{sec:time-space}
In \secref{collision}, we mentioned time-space tradeoffs for collision detection problems that fit into the tree framework introduced in that section.  We now apply this idea to obtain time-space tradeoffs for group isomorphism problems.  We start by noting that although the bidirectional generator-enumeration algorithm of \secref{gen-algorithm} solves general group isomorphism in only $n^{(1 / 2) \log_p n + O(1)}$ time where $p$ is the smallest prime dividing the order of the group, it requires $n^{(1 / 2) \log_p n + O(1)}$ space; on the other hand, the generator-enumeration algorithm can be implemented in polynomial space but requires $n^{\log_p n + O(1)}$ time.  Both of these algorithms may be regarded as extreme points of a more general deterministic time-space tradeoff obtained using the argument of \ssecref{d-time-space}.  If one is willing to allow quantum computation, then more efficient time-space tradeoffs are possible.  The same arguments can also be applied in order to obtain time-space tradeoffs for solvable-group isomorphism and the other isomorphism problems considered in this paper.

Let $G$ and $H$ be groups such that $p$ is the smallest prime dividing the order of $G$.  Using \algref{gen-AB}, we can compute sets $A$ and $B$ of size at most $n^{(1 / 2) \log_p n + O(1)}$ in time $n^{(1 / 2) \log_p n + O(1)}$ that contain a common element \ifft $G \cong H$.  The time space tradeoff for general groups then follows immediately by the argument of \ssecref{d-time-space}.

\begin{theorem}
  Let $G$ and $H$ be groups where $p$ is the smallest prime dividing the order of the group.  Then isomorphism can be tested deterministically in $O(T)$ time and $O(S)$ space where

  \begin{equation}
    TS = n^{\log_p n + O(1)}
  \end{equation}

  and $S \leq n^{(1 / 2) \log_p n + O(1)}$.
\end{theorem}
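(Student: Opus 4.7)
The plan is to interpolate between Algorithm~\ref{alg:gen-AB} (which corresponds to storing all of $A$ and $B$) and the generator-enumeration algorithm (which corresponds to storing essentially nothing) by applying the generic chunking strategy from \ssecref{d-time-space} to the bidirectional collision problem that is created by Algorithm~\ref{alg:gen-AB}. Concretely, by \lemref{gen-coll} combined with the analysis in \secref{gen-algorithm}, testing whether $G \cong H$ reduces to deciding whether the sets $A$ and $B$ produced by $\Call{\bigenenum}{G,H}$ share a common element, where $|A|,|B| \leq n^{(1/2)\log_p n + O(1)}$.

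First, I would observe that both $A$ and $B$ can be \emph{streamed} rather than fully materialized: the calls to $\Call{\insexts}{}$ are depth-first traversals of the subtrees described in \secref{gen-algorithm}, so by pushing the output step into the recursion one can enumerate the elements of $A$ and $B$ one at a time while using only polynomial working memory per element (the current partial generating set on the DFS stack). This is the key technical requirement for the chunking idea to be viable; the rest is bookkeeping.

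Next, given a space budget $S$, I would pick $\Delta = S / \poly(n)$ and partition the streamed enumerations of $A$ and $B$ into consecutive chunks of size at most $\Delta$. For every pair of chunks $(A_i, B_j)$, I would decide whether $A_i \cap B_j \neq \emptyset$ by loading both chunks into memory, sorting one and performing lookups from the other; this costs $O(\Delta \cdot \poly(n))$ time per chunk pair. The number of chunk pairs is at most $(|A|\cdot|B|)/\Delta^2$, so the total runtime is
\[
T \;=\; \frac{|A|\cdot|B|}{\Delta}\cdot\poly(n) \;=\; \frac{n^{\log_p n + O(1)}}{S},
\]
giving $TS = n^{\log_p n + O(1)}$ as desired. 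The space is $O(\Delta \cdot \poly(n)) = O(S)$, and the constraint $S \leq n^{(1/2)\log_p n + O(1)}$ simply reflects that once $\Delta$ exceeds $|A|$ or $|B|$ there is nothing more to gain (the algorithm degenerates to the all-in-memory case of \thmref{gen-group-iso}).

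The only step requiring any real care is the first one, confirming that $A$ and $B$ can be enumerated element-by-element in polynomial space per element; everything else is bounded above by the estimates already established for Algorithm~\ref{alg:gen-AB}. I do not anticipate a genuine obstacle since the traversal structure of $\Call{\insexts}{}$ is exactly the DFS that makes lazy enumeration natural, but for full rigor one would rewrite \algref{gen-AB} as a pair of coroutines that yield the next member of $A$ or $B$ on demand and verify that the state maintained across yields has polynomial size.
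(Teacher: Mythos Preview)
Your proposal is correct and matches the paper's approach exactly: the paper's proof of this theorem is a one-line invocation of the chunking argument from \ssecref{d-time-space} applied to the sets $A$ and $B$ produced by \algref{gen-AB}. Your explicit observation that $A$ and $B$ can be streamed via the DFS structure of $\Call{\insexts}{}$ with polynomial working memory per element is precisely the assumption the paper states but does not verify (``enumerating the sets $A$ and $B$ can be done at a cost of $\poly(s)$ per element; this will be the case in all of the problems considered in this paper''), so if anything you are being more careful than the paper.
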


The quantum algorithm follows via the argument of \ssecref{q-time-space}.

\begin{theorem}
  Let $G$ and $H$ be groups where $p$ is the smallest prime dividing the order of the group.  Then isomorphism can be tested using quantum algorithms in $O(T)$ time and $O(S)$ space where

  \begin{equation}
    T \sqrt{S} = n^{(1 / 2) \log_p n + O(1)}
  \end{equation}

  and $S \leq n^{(1 / 4) \log_p n + O(1)}$.
\end{theorem}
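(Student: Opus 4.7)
The plan is to apply the quantum tree-collision tradeoff of \ssecref{q-time-space} to the ordered-generating-sequence tree from \secref{gen-algorithm}. First I would reuse the setup of \secref{gen-algorithm}: view $G$ and $H$ under the promise as labelings of the abstract tree $T$ whose nodes are sequences $(k_1, \dots, k_j) \in K^j$ with each $k_{i+1} \notin \langle k_1, \dots, k_i \rangle$. Since ordered generating sets have length at most $\log_p n$ and each step has at most $n$ choices, the total tree size is $N = n^{\log_p n + O(1)}$, and a single node label (a tuple of group elements) has bit-size $s = \mathrm{poly}(\log n)$. Alice and Bob access $T$ via the labelings $\lambda_a$ and $\lambda_b$ induced by $\phi_a : K \to G$ and $\phi_b : K \to H$, and, exactly as in \secref{gen-algorithm}, some leaf of $T$ carries matching $\lambda_a$- and $\lambda_b$-labels iff $G \cong H$.

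Next I would invoke the quantum collision-detection routine of \ssecref{q-time-space}, which was designed precisely for this two-labeling setting: a Brassard--Hoyer--Tapp style algorithm on the full tree yields $T = \sqrt{N/S}\,\mathrm{poly}(s, \log N)$ for any space budget $S \leq N^{1/3}$, hence $T\sqrt{S} = \sqrt{N}\,\mathrm{poly}(s, \log N)$. Substituting $N = n^{\log_p n + O(1)}$ and absorbing the $\mathrm{poly}(s, \log N) = n^{O(1)}$ overhead into the exponent gives $T\sqrt{S} = n^{(1/2)\log_p n + O(1)}$, which is exactly the claim. The general constraint $S \leq N^{1/3} = n^{(1/3)\log_p n + O(1)}$ strictly dominates the stated range $S \leq n^{(1/4)\log_p n + O(1)}$, so every admissible $S$ in the theorem lies in the valid regime of the underlying quantum tradeoff.

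The one step requiring care is verifying that the quantum tree-collision algorithm composes with the tree construction from \secref{gen-algorithm}: each Alice or Bob oracle query must be simulable in $\mathrm{poly}(\log n)$ time from a node label, but this is routine since extending a partial ordered generating sequence reduces to a subgroup-membership test and a single lookup in the multiplication table. Once this is checked, there is no substantive obstacle --- the theorem amounts to specializing the quantum tradeoff of \ssecref{q-time-space} to the ordered-generating-sequence tree of \secref{gen-algorithm}, exactly in parallel with how the preceding deterministic tradeoff was obtained from \ssecref{d-time-space}.
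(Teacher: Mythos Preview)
Your approach is correct and matches the paper's own argument, which is simply the one-line observation that the quantum tradeoff of \ssecref{q-time-space} applied to the generating-sequence tree of \secref{gen-algorithm} (with $N = n^{\log_p n + O(1)}$) yields the stated bound. One small slip: simulating an oracle query (subgroup membership plus enumerating children) costs $\poly(n)$, not $\poly(\log n)$, but since you correctly absorb all such overhead into the $n^{O(1)}$ term this does not affect the conclusion.
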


For solvable-group isomorphism, we obtain deterministic and quantum time-space tradeoffs using the same techniques.

\begin{theorem}
  Let $G$ and $H$ be solvable groups where $p$ is the smallest prime dividing the order of the group.  Then isomorphism can be tested deterministically in $O(T)$ time and $O(S)$ space where

  \begin{equation}
    TS = n^{(1 / 2) \log_p n + O(\log n / \log \log n)}
  \end{equation}

  and $S \leq n^{(1 / 4) \log_p n + O(\log n / \log \log n)}$.
\end{theorem}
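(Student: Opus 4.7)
The plan is to reduce this to an application of the chunking time–space tradeoff of \ssecref{d-time-space} to the composition-series enumeration underlying \thmref{sol-group-iso}, while tracking the $O(\log n / \log \log n)$ overhead from the solvable-to-$p$-group reduction in \lemref{p-sol-coll}. The two extreme points we must hit are $(T, S) = (n^{(1/4)\log_p n + O(\log n/\log\log n)}, n^{(1/4)\log_p n + O(\log n/\log\log n)})$, which is exactly \thmref{sol-group-iso}, and $(T, S) = (n^{(1/2)\log_p n + O(\log n/\log\log n)}, \poly(n))$, which matches the earlier square-root speedup of \cite{rosenbaum2012b}. The trick is that the entire tradeoff should arise from a single parameter $\Delta$ controlling how many composition series are buffered at once.

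First, I would invoke the framework of \lemref{p-sol-coll}: it suffices to produce, in the required time and space, sets $A$ and $B$ of composition series for $G$ and $H$ such that, whenever $\phi \colon G \to H$ is an isomorphism, some $S \in A$ and $S' \in B$ are $\phi$-isomorphic. For $p$-groups, the algorithm underlying \thmref{p-group-iso} already does this with $|A|, |B| \le n^{(1/4)\log_p n + O(1)}$ via $\Call{\compsA}{G, t(G)}$ and $\Call{\compsB}{H, t(H)}$ (using \lemref{comps-AB-cor}, \lemref{num-comp-choices}, and \lemref{comp-t}). Lifting to solvable groups through \lemref{p-sol-coll} inflates the exponent by $O(\log n / \log \log n)$, so the analogous sets for the solvable case satisfy $|A|, |B| \le n^{(1/4)\log_p n + O(\log n/\log\log n)}$.

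Next, I would apply the deterministic chunking argument of \ssecref{d-time-space}. Rather than materializing $A$ and $B$ in their entirety, I would enumerate each set in chunks of $\Delta$ composition series (represented by their terminal "leaf" descriptors, analogous to the pruning described in \ssecref{d-time-space}), and for each pair of chunks test whether they share a common element by sorting and scanning. This yields
\begin{equation*}
  T \;=\; \frac{|A| \cdot |B|}{\Delta} \cdot \poly(n) \;=\; \frac{n^{(1/2)\log_p n + O(\log n / \log \log n)}}{\Delta}, \qquad S \;=\; \Delta \cdot \poly(n),
\end{equation*}
so that $TS = n^{(1/2)\log_p n + O(\log n / \log \log n)}$, and the constraint $\Delta \le \min(|A|, |B|)$ produces the space ceiling $S \le n^{(1/4)\log_p n + O(\log n/\log\log n)}$.

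The main obstacle will be the \emph{streaming} part: I must check that the composition series in $A$ and $B$ can be enumerated one at a time with only $\poly(n)$ additional space per output (so that storing a chunk of $\Delta$ series costs $\Delta \cdot \poly(n)$ space, not more). This requires revisiting $\Call{\comps}{}$ in \algref{comp-count} and implementing $\Call{ALL\text{-}CHOICES}{}$ as an iterator driven by a $\poly(n)$-depth DFS over the nondeterministic choices, with the socle-by-socle recursion stack reused across outputs. A secondary subtlety is that the $O(\log n/\log\log n)$ factor in \lemref{p-sol-coll} for solvable groups arises from a recursion across composition factors of differing primes; I would need to verify that this recursion can be organized so the chunk buffer of size $\Delta$ is shared across the recursion levels rather than multiplied by depth, which is the natural implementation but deserves explicit attention.
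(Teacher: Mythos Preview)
Your proposal is correct and follows essentially the same approach as the paper: the paper states only that ``For solvable-group isomorphism, we obtain deterministic and quantum time-space tradeoffs using the same techniques,'' i.e., apply the chunking argument of \ssecref{d-time-space} to the composition-series sets $A$ and $B$ underlying \thmref{sol-group-iso}, with the $O(\log n/\log\log n)$ overhead coming from \lemref{p-sol-coll}. You are in fact more explicit than the paper about the streaming-enumeration and buffer-sharing issues, which the paper does not address at all.
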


\begin{theorem}
  Let $G$ and $H$ be solvable groups where $p$ is the smallest prime dividing the order of the group.  Then isomorphism can be tested using quantum algorithms in $O(T)$ time and $O(S)$ space where $S \leq n^{(1 / 6) \log_p n + O(\log n / \log \log n)}$ and

  \begin{equation}
    T \sqrt{S} = n^{(1 / 4) \log_p n + O(\log n / \log \log n)}
  \end{equation}
\end{theorem}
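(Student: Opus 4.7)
The plan is to combine the quantum collision-detection time-space tradeoff of \ssecref{q-time-space} with the composition series enumeration machinery of \secref{p-sol-algorithm}, invoked through the framework of \lemref{p-sol-coll}. First, I would set up exactly the same tree $T$ of composition series as in the proofs of \thmref{p-group-iso} and \thmref{sol-group-iso}: its leaves correspond to composition series for an abstract group $K \cong G \cong H$, with Alice and Bob having labelings $\lambda_a$ and $\lambda_b$ induced by the isomorphisms $\phi_a : K \ra G$ and $\phi_b : K \ra H$. By \lemref{num-comp-choices}, together with the solvable-to-$p$-group reduction underlying \lemref{p-sol-coll}, the total number of leaves is $N = n^{(1/2) \log_p n + O(\log n / \log \log n)}$.

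Next, rather than splitting the tree bidirectionally, I would follow \ssecref{q-time-space} and run the quantum claw-detection algorithm of~\cite{bassard1997b} on the full trees for Alice and Bob. As noted there, its $N^{1/3} \poly(s, \log N)$ runtime scales with the size of a single tree, so no degradation arises from using two separate copies, and the standard modification to use only $S$ space yields a tradeoff of $T \sqrt{S} = \sqrt{N} \poly(s, \log N)$ subject to $S \leq N^{1/3}$. Substituting $N = n^{(1/2) \log_p n + O(\log n / \log \log n)}$ produces $T \sqrt{S} = n^{(1/4) \log_p n + O(\log n / \log \log n)}$ together with $S \leq n^{(1/6) \log_p n + O(\log n / \log \log n)}$, matching the bounds in the statement.

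The main obstacle is to verify that the solvable-to-$p$-group reduction packaged inside \lemref{p-sol-coll} carries over to the quantum time-space setting. That reduction glues together collision-detection outputs across chief factors, and I would need to check that when the underlying collision subroutine is replaced by the quantum tradeoff algorithm, the composition of the subroutine calls preserves both the $T \sqrt{S}$ product and the constraint on $S$, with the $O(\log n / \log \log n)$ overhead absorbed only into the exponent of $N$ (as is already the case in the deterministic tradeoff theorem immediately preceding this statement). Once this bookkeeping is done, the theorem follows by plugging $N$ into the quantum tradeoff of \ssecref{q-time-space} and invoking the quantum analogue of \lemref{p-sol-coll}.
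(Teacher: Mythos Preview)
Your proposal is correct and matches the paper's approach. The paper itself gives no detailed proof of this theorem---it simply states that the solvable-group tradeoffs are obtained ``using the same techniques'' as the general-group tradeoffs---and your write-up spells out exactly those techniques: take the composition-series tree with $N = n^{(1/2)\log_p n + O(\log n/\log\log n)}$ leaves (via \lemref{num-comp-choices} and the solvable reduction in \lemref{p-sol-coll}), feed it to the quantum claw-detection tradeoff of \ssecref{q-time-space}, and read off $T\sqrt{S} = \sqrt{N}$ and $S \le N^{1/3}$. Your caveat about checking that the solvable-to-$p$-group reduction is compatible with the quantum subroutine is reasonable care, but the paper treats this as routine bookkeeping absorbed into the $O(\log n/\log\log n)$ term, exactly as in the preceding deterministic tradeoff.
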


Similar results hold for rings and graphs.

\section{Conclusion}
\label{sec:conclusion}
In this work, we introduced the bidirectional collision-detection technique and used it to obtain speedups over the previous best algorithms for the group, ring and graph isomorphism problems.  We summarize the state of the art for the isomorphism problems considered in this paper in~\tabref{group-iso}.

\begin{table}[H]
  \centering
  \begin{tabulary}{\textwidth}{|C|C|C|C|}
    \hline
    \textbf{Class of objects} & \textbf{Runtime} & \textbf{Paradigm} & \textbf{Speedup} \\
    \hline
    General groups & $n^{(1 / 2) \log n + O(1)}$ & Deterministic & $T^{1 / 2}$ \\ % Best previous is generator enumeration
    \hline
    General groups & $n^{(1 / 3) \log n + O(1)}$ & Quantum & $T^{2 / 3}$ \\ % Best previous if Grover search applied naively to generator enumeration
    \hline
    Solvable groups & $n^{(1 / 4) \log n + O(\log n / \log \log n)}$ & Deterministic & $T^{1 / 4}$ \\ % Best previous is generator enumeration
    \hline
    Solvable groups & $n^{(1 / 6) \log n + O(\log n / \log \log n)}$ & Quantum & $T^{1 / 3}$ \\ % Best previous is Grover search applied naively to generator enumeration
    \hline
    $p$-groups & $n^{(1 / 4) \log n + O(1)}$ & Deterministic & $T^{1 / 4}$ \\ % Best previous is generator enumeration
    \hline
    $p$-groups & $n^{(1 / 6) \log n + O(1)}$ & Quantum & $T^{1 / 3}$ \\ % Best previous is Grover search applied naively to generator enumeration
    \hline
    Rings & $n^{(1 / 2) \log n + O(1)}$ & Deterministic & $T^{1 / 2}$ \\ % Best previous is generator enumeration (noting that for an Abelian group there are only n^{(1 / 2) \log n} generating sets)
    \hline
    Rings & $n^{(1 / 3) \log n + O(1)}$ & Quantum & $T^{2 / 3}$ \\ % Best previous is Grover search applied naively to generator enumeration (noting that for an Abelian group there are only n^{(1 / 2) \log n} generating sets)
    \hline
    Graphs & $2^{4 \sqrt{c} \sqrt{n \log n} + O(\log n)}$ & Deterministic & $T^{1 / \sqrt{2}}$ \\ % Best previous is BKL algorithm with constants optimized
    \hline
    %Graphs & $2^{4 \sqrt{2 c / 3} \sqrt{n \log n} + O(1)}$ & Quantum & $T^1$ \\ % Best previous is naive Grover search applied to Zemlyachenko's lemma with constants optimized
    %\hline
    % Tournaments & $n^{(1 / 4) \log n + O(1)}$ & Deterministic & $T^{1 / 2}$ \\ % Best previous is BL
    % \hline
    % Tournaments & $n^{(1 / 6) \log n + O(1)}$ & Quantum & $T^{1 / 6}$ \\ % Best previous is quantum collision detection applied to BL
    % \hline
  \end{tabulary}
  \caption{Algorithms for isomorphism problems}
  \label{tab:group-iso}
\end{table}

It is interesting to note that there is currently no advantage for randomized algorithms over deterministic algorithms in this regime.  We consider the question of whether such algorithms exist to be an interesting open problem; the techniques used in the author's previous work~\cite{rosenbaum2012b} for constructing faster randomized algorithms no longer suffice so new ideas appear to be required.

Our algorithm for general groups relies only on bidirectional collision detection and does not exploit the composition series machinery used in the algorithms for $p$-groups and solvable groups.  It would certainly be interesting if a method could be found that allowed us to obtain a square-root speedup using composition series for general groups as well; by combining with bidirectional collision detection, we could then hope for a fourth-root speedup for general groups.

\section*{Acknowledgements}
I thank Paul Beame and Aram Harrow for useful discussions and feedback.  Part of this work was completed while I was a long term visitor at the Center for Theoretical Physics at the Massachusetts Institute of Technology.  I was funded by the DoD AFOSR through an NDSEG fellowship.  Partial support was provided by the NSF under grant CCF-0916400.

%\inlong{\newpage}
\bibliographystyle{initials}
\bibliography{$HOME/LaTeX/computer-science-references,$HOME/LaTeX/math-references,$HOME/LaTeX/quantum-computing-references} %$

\begin{thebibliography}{10}

\bibitem{babai1981a}
L.~Babai.
\newblock Moderately exponential bound for graph isomorphism.
\newblock In {\em Proceedings of the 1981 International FCT-Conference on
  Fundamentals of Computation Theory}, pages 34--50, 1981.

\bibitem{babai2011a}
L.~Babai, P.~Codenotti, J.~A. Grochow, and Y.~Qiao.
\newblock Code equivalence and group isomorphism.
\newblock In {\em Proceedings of the Twenty-Second Annual ACM-SIAM Symposium on
  Discrete Algorithms}, pages 1395--1408, 2011.

\bibitem{babai2012b}
L.~Babai, P.~Codenotti, and Y.~Qiao.
\newblock Polynomial-time isomorphism test for groups with no abelian normal
  subgroups (extended abstract).
\newblock In {\em 39th International Colloquium on Automata, Languages and
  Programming (ICALP 2012)}, pages 51--62, 2012.

\bibitem{babai1983b}
L.~Babai, W.~M. Kantor, and E.~M. Luks.
\newblock Computational complexity and the classification of finite simple
  groups.
\newblock In {\em Proceedings of the 24th Annual Symposium on Foundations of
  Computer Science}, pages 162--171, 1983.

\bibitem{babai1983a}
L.~Babai and E.~M. Luks.
\newblock Canonical labeling of graphs.
\newblock In {\em Proceedings of the Fifteenth Annual ACM Symposium on Theory
  of Computing}, pages 171--183, 1983.

\bibitem{babai2012a}
L.~Babai and Y.~Qiao.
\newblock Polynomial-time isomorphism test for groups with abelian sylow
  towers.
\newblock In {\em 29th International Symposium on Theoretical Aspects of
  Computer Science (STACS 2012)}, pages 453--464, 2012.

\bibitem{bassard1997b}
G.~Brassard, P.~H{\o}yer, and A.~Tapp.
\newblock Quantum algorithm for the collision problem.
\newblock {\em ACM SIGACT News}, 28:14--19, 1997.

\bibitem{codenotti2011a}
P.~Codenotti.
\newblock {\em Testing Isomorphism of Combinatorial and Algebraic Structures}.
\newblock PhD thesis, University of Chicago, 2011.

\bibitem{felsch1970a}
V.~Felsch and J.~Neub\"{u}ser.
\newblock On a programme for the determination of the automorphism group of a
  finite group.
\newblock In {\em Computational Problems in Abstract Algebra}, pages 59 -- 60,
  1970.

\bibitem{legall2008a}
F.~L. Gall.
\newblock Efficient isomorphism testing for a class of group extensions.
\newblock {\em ArXiv e-prints}, 2008.

\bibitem{kavitha2007a}
T.~Kavitha.
\newblock Linear time algorithms for {Abelian} group isomorphism and related
  problems.
\newblock {\em Journal of Computer and System Sciences}, 73(6):986--996, 2007.

\bibitem{lipton2011b}
R.~Lipton.
\newblock The group isomorphism problem: A possible polymath problem?
\newblock
  http://rjlipton.wordpress.com/2011/11/07/the-group-isomorphism-problem-a-possible-polymath-problem/,
  2011.

\bibitem{lipton1977a}
R.~Lipton, L.~Snyder, and Y.~Zalcstein.
\newblock {\em The Complexity of Word and Isomorphism Problems for Finite
  Groups}.
\newblock Defense Technical Information Center, 1977.

\bibitem{miller1978a}
G.~L. Miller.
\newblock On the \(n^{\log n}\) isomorphism technique (a preliminary report).
\newblock In {\em Proceedings of the Tenth Annual ACM Symposium on Theory of
  Computing}, pages 51--58, 1978.

\bibitem{qiao2011a}
Y.~Qiao, J.~Sarma, and B.~Tang.
\newblock {On Isomorphism Testing of Groups with Normal Hall Subgroups}.
\newblock In {\em 28th International Symposium on Theoretical Aspects of
  Computer Science (STACS 2011)}, pages 567--578, 2011.

\bibitem{rosenbaum2012b}
D.~{Rosenbaum}.
\newblock {Breaking the \(n^{\log n}\) Barrier for Solvable-Group Isomorphism}.
\newblock {\em ArXiv:1205.0642}, 2012.

\bibitem{savage1980a}
C.~Savage.
\newblock {\em An \(O(n^2)\) algorithm for Abelian group isomorphism}.
\newblock Computer Studies Program, North Carolina State University, 1980.

\bibitem{vikas1996a}
N.~Vikas.
\newblock \text{An \(O(n)\)} algorithm for {Abelian} \text{\(p\)}-group
  isomorphism and \text{an \(O(n \log n)\)} algorithm for {Abelian} group
  isomorphism.
\newblock {\em Journal of Computer and System Sciences}, 53:1--9, 1996.

\end{thebibliography}

\end{document}